\documentclass[conference]{IEEEtran}

\usepackage[cmex10]{amsmath}
\usepackage[pdftex]{graphicx}
\usepackage{latexsym}
\usepackage{amssymb}
\usepackage{bm}
\usepackage{psfrag}
\usepackage{mathdots}
\usepackage{amsthm}
\usepackage{multirow}
\usepackage{subfigure}

\usepackage[pdftex]{graphicx,xcolor}
\usepackage{cite}

\theoremstyle{definition}

\newtheorem{theorem}{Theorem}

\newtheorem{def.}{Definition}
\newtheorem{th.}{Theorem}
\newtheorem{le.}{Lemma}
\newtheorem{ex.}{Example}
\newtheorem{re.}{Remark}

\renewcommand{\vec}[1]{\boldsymbol{#1}}
\newcommand{\mat}[1]{\mathbf{#1}}
\newcommand{\rank}{\mathrm{rank}}
\newcommand{\wt}{\mathrm{wt}}


\begin{document}

\sloppy

\title{
  Efficient Encoding Algorithm of Binary and Non-Binary LDPC Codes Using Block Triangulation
}

\author{
\IEEEauthorblockN{Yuta Iketo and Takayuki Nozaki}
\IEEEauthorblockA{
Dept. of Informatics, Yamaguchi University, JAPAN\\
Email: \{b001vb,tnozaki\}@yamaguchi-u.ac.jp
}
}
\maketitle

\begin{abstract}
  We propose an efficient encoding algorithm for the binary and non-binary low-density parity-check codes.
  This algorithm transforms the parity part of the parity-check matrix into a block triangular matrix with low weight diagonal submatrices by row and column permutations in the preprocessing stage.
  This algorithm determines the parity part of a codeword by block back-substitution to reduce the encoding complexity in the encoding stage.
  Numerical experiments show that this algorithm has a lower encoding complexity than existing encoding algorithms.
  Moreover, we show that this algorithm encodes the non-binary cycle codes in linear time.
\end{abstract}

\section{Introduction}\label{sec:1}
Low-density parity-check (LDPC) codes \cite{gallager1962low} are defined by sparse parity-check matrices $\mat{H} \in \mathbb{F}^{m \times n}_q$, where $\mathbb{F}_q$ stands for Galois field of order $q$.
The LDPC codes are decoded by the sum-product decoding algorithm in linear time and come close to the Shannon capacity on many channels \cite{richardson2001design}.
It is known that irregular LDPC codes have lower block-error rates than regular LDPC codes \cite{hu2005regular}, \cite{amraoui2006asymptotic}.

Let $\mat{G} \in \mathbb{F}^{(n-m) \times n}_q$ be a generator matrix corresponding to the parity-check matrix $\mat{H}$.
In general, encoding for linear codes maps a message $\vec{u} \in \mathbb{F}^{n-m}_q$ into a codeword $\vec{x} \in \mathbb{F}^{n}_q$ by $\vec{x}^T = \vec{u}^T\mat{G}$.
Even if $\mat{H}$ is a sparse matrix, $\mat{G}$ is not always a sparse matrix.
Hence, the complexity of an encoding algorithm (EA) by a generator matrix is quadratic in the code length.
In other words, an EA by a generator matrix has a larger complexity than the sum-product decoding algorithm for long code length.
The encoding can be a bottleneck in a communication system with an LDPC code.
Therefore, it is important to develop a low complexity EA for LDPC codes.

In this paper, we assume that the parity-check matrix $\mat{H}$ has full rank,
i.e., $\rank (\mat{H}) = m$. 
By performing suitable column permutations into $\mat{H}$, 
we get $\left( \mat{H}_P \ \ \mat{H}_I \right)$,
where $\mat{H}_P$ is non-singular.
Then the codeword $\vec{x} \in \mathbb{F}^{n}_q$ is split into two parts, namely the parity part $\vec{p} \in \mathbb{F}^{m}_q$ and the message part $\vec{u} \in \mathbb{F}^{n-m}_q$, as $\vec{x}=(\vec{p} \ \vec{u})$.
Since $\mat{H}\vec{x}^{T} = \vec{0}^T$, we get $\mat{H}_P\vec{p}^T = -\mat{H}_I\vec{u}^T$.

In general, the encoding for LDPC codes is accomplished in two stages: the preprocessing stage and the encoding stage.
In the preprocessing stage, we split $\mat{H}$ into $\mat{H}_I$ corresponding to $\vec{u}$, and $\mat{H}_P$ corresponding to $\vec{p}$.
In the encoding stage, we determine $\vec{p}$ by solving $\mat{H}_P\vec{p}^T = -\mat{H}_I\vec{u}^T$.
Since the computation of $\vec{b}^T := -\mat{H}_I\vec{u}^T$ is the product of a sparse matrix and a known vector, the computation complexity is linear in the code length.
Hence, in encoding for LDPC codes, we should consider 
(i) an algorithm to construct suitable $\mat{H}_P$ and (ii) an algorithm to solve $\mat{H}_P\vec{p}^T=\vec{b}^T$ efficiently.
We transform $\mat{H}$ only by row and column permutations to keep the sum-product decoding performance.
Hence, by using two permutation matrices $\mat{P} \in \mathbb{F}^{m \times m}_q$ and $\mat{Q} \in \mathbb{F}^{n \times n}_q$, 
we transform $\mat{H}$ into $\mat{P}\mat{H}\mat{Q}=\left( \mat{H}_P \ \ \mat{H}_I \right)$.

The existing works of the EA are itemized as follows and summarized in Table \ref{tab:Compres}.
\begin{enumerate}
\item Richardson and Urbanke \cite{richardson2001efficient} proposed an efficient EA by transforming $\mat{H}_P$ into an approximate triangular matrix (ATM).
The complexity of this EA is $O(n+\delta^2)$, where $n$ is a code length and $\delta$ is a \textit{gap} that satisfies $\delta \ll n$ and is proportional to $n$.
Hence, the complexity of this EA is $O(n^2)$ but is lower than that of the EA by a generator matrix.

\item Kaji \cite{kaji2006encoding} proposed an EA by the LU-factorization.
  This EA has lower complexity for codes with small gaps than Richardson and Urbanke's EA (RU-EA).

\item Shibuya and Kobayashi \cite{shibuya2014efficient} proposed an EA by block-triangulation.
This EA is more efficient than the RU-EA.
However, this EA has a possibility of abend \cite{nozaki2014},
More precisely, there is a possibility that some diagonal submatrices become vertical, i.e., singular.

\item Nozaki \cite{nozaki2015parallel} proposed a parallel EA by block-diagonalization.
This EA has a lower encoding time than the RU-EA by parallel computation.
However, this EA has a slightly larger total computation complexity than the RU-EA.

\item Huang and Zhu \cite{huang2006linear} proposed a linear time EA for non-binary cycle codes, i.e., codes defined by parity-check matrices of which all the column weights are two.
This EA transforms $\mat{H}_P$ into an upper block-bidiagonal matrix whose diagonal submatrices are cycle or diagonal matrices.
\end{enumerate}

\begin{table*}[!t]
  \begin{center}
    \caption{Comparison with existing works}
    \begin{tabular} {lcclll} \hline 
      \multirow{2}{*}{Work} & \multicolumn{2}{c}{Field of code} & \multirow{2}{*}{Type of code} & \multirow{2}{*}{Technique} & \multirow{2}{*}{Note}\\
      & binary & non-binary & & & \\ \hline
      Richardson and Urbanke \cite{richardson2001efficient} & yes & yes & Irregular & Approximate triangulation & \\
      Kaji \cite{kaji2006encoding} & yes & yes & Irregular & LU factorization &\\
      Shibuya and Kobayashi \cite{shibuya2014efficient} & yes & yes & Irregular & Block triangulation & Possibility of abend \\
      Nozaki \cite{nozaki2015parallel} & yes & yes & Irregular & Block diagonalization & Encodable at parallel\\
      Huang and Zhu \cite{huang2006linear} & no & yes & Cycle & Block bidiagonalization & Encodable in linear \\
      This work & yes & yes & Irregular & Block triangulation & \\ \hline
    \end{tabular}
    \label{tab:Compres}
  \end{center}
\end{table*}

The goal of this work is to reduce the encoding complexity for binary and non-binary irregular LDPC codes.
The main idea of this work is the block triangularization of a given parity check matrix.
In detail, our proposed EA transforms each of the diagonal submatrices into a cycle or a diagonal matrix as much as possible.
For the non-binary cycle codes, the resulting matrix by the proposed EA coincides with one by Huang and Zhu's EA (HZ-EA).
In other words, the proposed EA is a generalization of the HZ-EA.
Therefore, the complexity of the proposed EA is $O(n)$ for the non-binary cycle codes.
Numerical experiments show that the proposed EA has lower complexity than the RU-EA and Kaji's EA (K-EA).

We organize the rest of the paper as follows.
Section \ref{sec:pre} introduces LDPC codes and existing EAs.
Section \ref{sec:3} presents the proposed EA.
Numerical examples in Sect.\ \ref{sec:4} compare the complexity of the proposed EA and existing EAs.

\section{Preliminaries}\label{sec:pre}
This section introduces LDPC codes and existing EAs.

\subsection{LDPC Codes}
LDPC codes are defined by sparse parity-check matrices $\mat{H} \in \mathbb{F}^{m \times n}_q$ as 
$\mathcal{C}_q =\bigl\{ \vec{x} \in \mathbb{F}_q^n \mid\mat{H}\vec{x}^T=\vec{0}^T \bigr\}$.
In particular, $\mathcal{C}_2$ is a binary LDPC code.
When $q>2$, we call $\mathcal{C}_q$ non-binary LDPC code.

Each parity-check matrix $\mat{H}$ is represented as the Tanner graph \cite{wiberg1995codes}, \cite{wiberg1996codes}.
The rows (resp.\ columns) of $\mat{H}$ correspond to the check (resp.\ variable) nodes.
In the Tanner graphs of $(d_c,d_r)$-regular LDPC codes, all the variable (resp.\ check) nodes are of degree $d_c$ (resp.\ $d_r$).
On the other hand, an irregular LDPC code is chosen the degrees of nodes according to two-degree distributions:
the check degree distribution $\rho(x)=\sum_{i}\rho_ix^{i-1}$ and the variable degree distribution $\lambda(x)=\sum_{i}\lambda_ix^{i-1}$,
where $\rho_i$ (resp.\ $\lambda_i$) is the fraction of edges connecting to check (resp.\ variable) nodes of degree $i$.

In particular, LDPC codes satisfying $\lambda(x)=x$ are known as cycle codes \cite{richardson2008modern}.
Note that there are no restrictions to the check degree distribution for the cycle codes.

\subsection{Richardson and Urbanke's Encoding Algorithm \cite{richardson2001efficient}}

The preprocessing stage transforms a given $\mat{H}$ into an ATM $\mat{H}^{\rm RU}$, described as
\begin{equation*}
  \mat{H}^{\rm RU} = \mat{P}\mat{H}\mat{Q} =
  \begin{pmatrix}
    \mat{T} & \mat{S} & \mat{H}^{\rm RU}_{I,1}\\
    \mat{V} & \mat{N} & \mat{H}^{\rm RU}_{I,2}\\
  \end{pmatrix}
  =
  \begin{pmatrix}
    \mat{A} &
    \begin{matrix}
      \mat{H}^{\rm RU}_{I,1}\\\mat{H}^{\rm RU}_{I,2}\\
    \end{matrix}
  \end{pmatrix},
\end{equation*}
where $\mat{T}$ is an $(m-\delta)\times (m-\delta)$ triangular matrix, 
$\mat{S}$, $\mat{V}$, $\mat{N}$, $\mat{H}^{\rm RU}_{I,1}$, and $\mat{H}^{\rm RU}_{I,2}$ are 
$\delta \times(m - \delta)$, $(m - \delta)\times \delta$, $\delta \times \delta$, $(m - \delta)\times(n - m)$, and $\delta \times(n - m)$ matrices, respectively.
We define $\mat{\Phi}:= \mat{N} - \mat{V} \mat{T}^{-1} \mat{S}$.
We compute its inverse matrix $\mat{\Phi}^{-1}$ before the encoding stage.

This preprocessing stage is regarded as an algorithm whose input is a matrix $\mat{H}$ and output is a tuple of matrices $\bigl(\mat{H}^{\rm RU},\mat{P},\mat{Q}\bigr)$.
Therefore, we denote this algorithm by $\mathsf{ATM}(\mat{H}) \to \bigl(\mat{H}^{\rm RU},\mat{P},\mat{Q}\bigr)$ \cite{nozaki2015parallel}.
This notation will be used in Sect.\ \ref{sec:3-2}.

We split the parity part $\vec{p}$ into $(\vec{p}_1, \vec{p}_2)$, where $\vec{p}_1 \in \mathbb{F}^{m-\delta}_q$ and $\vec{p}_{2} \in \mathbb{F}^{\delta}_q$.
Then, the encoding stage executes the following procedure;
\begin{enumerate}
\item Compute $\vec{b}_1^{T} = \mat{H}_{I,1}^{\rm RU}\vec{u}^{T}$ and $\vec{b}_2^{T} = \mat{H}_{I,2}^{\rm RU}\vec{u}^{T}$
\item Derive $\vec{p}_2$ from $\vec{p}_2^{T} = \mat{\Phi}^{-1} (\mat{V} \mat{T}^{-1} \vec{b}_1^{T} - \vec{b}_2^{T})$
\item Solve $\mat{T} \vec{p}_1^{T} = - \mat{S} \vec{p}_2^{T} - \vec{b}_1^{T}$ by backward-substitution
\end{enumerate}

Consider the matrix $\mat{Z}$ over $\mathbb{F}_q$.
Denote the number of non-zero elements in the matrix $\mat{Z}$ by $\wt(\mat{Z})$.
Let $\mathcal{Z}(\mat{Z})$ be the number of rows that include at least one non-zero element in $\mat{Z}$.
Define $\mathcal{S}(\mat{Z}) := \wt(\mat{Z}) - \mathcal{Z}(\mat{Z})$.
Then, $\wt(\mat{Z})$ (resp.\ $\mathcal{S}(\mat{Z})$) gives the number of multiplication (resp.\ addition) over $\mathbb{F}_q$ to calculate the product of the matrix $\mat{Z}$ and a vector.
By using this, the total number of multiplications $\mu_{\rm RU}$ and additions $\alpha_{\rm RU}$ in the encoding stage of the RU-EA are
\begin{align}
  &\mu_{\rm RU} = \wt\bigl(\mat{H}^{\rm RU}_{I}\bigr) + f_m(\mat{A}),
  \quad 
  \alpha_{\rm RU}  =  \mathcal{S}\bigl(\mat{H}^{\rm RU}_{I}\bigr) + f_a(\mat{A}) ,
  \notag \\ 
  &f_m(\mat{A}) := 2\wt(\mat{T}) + \wt(\mat{V}) + \wt(\mat{S}) + \wt\bigl(\mat{\Phi}^{-1}\bigr),
  \label{eq:def-fm} \\ 
  &f_a(\mat{A}) := 2\mathcal{S}(\mat{T}) + \mathcal{S}(\mat{V}) + \mathcal{S}(\mat{S}) + \mathcal{S}\bigl(\mat{\Phi}^{-1}\bigr) + m.
  \label{eq:def-fa}
\end{align}
Note that $f_m(\mat{A})$ and $f_a(\mat{A})$ represent the number of multiplication and addition to solve $\mat{A} \vec{p} = \vec{b}$, respectively.

\subsection{Kaji's Encoding Algorithm \cite{kaji2006encoding}} \label{sec:kaji}

The preprocessing stage of the K-EA transforms a given $\mat{H}$ into an ATM $\mat{H}^{\rm RU}$.
Next, this stage factorizes $\mat{A} = \mat{L}\mat{U}$, where $\mat{L}$ and $\mat{U}$ are lower and upper triangular matrices, respectively.

This encoding stage is accomplished in three steps:
\begin{enumerate}
\item Compute $\vec{b}^T = -\mat{H}^{\rm RU}_{I} \vec{u}^T$
\item Solve $\mat{L}\vec{v}^T = \vec{b}^T$ by forward-substitution.
\item Solve $\mat{U}\vec{p}^T = \vec{b}^T$ by backward-substitution.
\end{enumerate}
The complexity, i.e., the number of multiplications $\mu_{\rm K}$ and additions $\alpha_{\rm K}$, is evaluated as
\begin{align*}
  \mu_{\rm K} &= \wt\bigl(\mat{H}^{\rm RU}_{I}\bigr) + \wt(\mat{L}) + \wt(\mat{U}),\\
 \alpha_{\rm K}  &= \mathcal{S}\bigl(\mat{H}^{\rm RU}_{I}\bigr) + \mathcal{S}(\mat{L}) + \mathcal{S}(\mat{U}).
\end{align*}
It is known that the K-EA has lower complexity than the RU-EA for the codes with small gaps \cite{kaji2006encoding}.

\subsection{Singly Bordered Block-Diagonalization}\label{sec:2-c}
By the singly bordered block-diagonalization \cite{aykanat2004permuting},
a given $\mat{H}$ is transformed into $\mat{H}^{\rm SBBD}$:
\begin{equation*}
  \mat{H}^{\rm{SBBD}} = \mat{P}\mat{H}\mat{Q} =	
  \begin{pmatrix}
    \mat{B}_1 & \mat{O}   & \mat{Z}_1 \\
    \mat{O}    & \mat{B}_2  & \mat{Z}_2 \\
  \end{pmatrix},
\end{equation*}
where $\mat{B}_1$, $\mat{B}_2$, $\mat{Z}_1$, and $\mat{Z}_2$ are $m_1 \times n_1$, $(m-m_1) \times n_2$, $m_1 \times (n-n_1-n_2)$, and $(m-m_1) \times (n-n_1-n_2)$ matrices, respectively.
In the singly bordered block-diagonalization, we can predetermine the numbers of rows $m_1, m_2$.
However, the numbers of columns $n_1, n_2$ depend on the input matrix $\mat{H}$.
Hence, if $m_1$ is small, the submatrix $\mat{B}_1$ becomes vertical, i.e., $m_1 > n_1$.
Thus, to get a square or horizontal submatrix $\mat{B}_1$, we need to adjust the size of $m_1$.

The singly bordered block-diagonalization is regarded as an algorithm whose inputs are a matrix $\mat{H}$ and $m_1$ and output is a tuple of matrices $\bigl(\mat{H}^{\rm SBBD},\mat{P},\mat{Q}\bigr)$.
Therefore, we denote this algorithm by $\mathsf{SBBD}(\mat{H},m_1) \to \bigl(\mat{H}^{\rm SBBD}, \mat{P},\mat{Q}\bigr)$ \cite{nozaki2015parallel}.
This notation will be used in Sect.\ \ref{sec:3-2}.

\subsection{Huang and Zhu's Encoding Algorithm}
This section shows the HZ-EA \cite{huang2006linear} and complexity to solve the equation with a cycle matrix.

\subsubsection{Preprocessing Stage of HZ-EA}\label{sec:2-d1}
The \textit{associated graph} \cite{huang2006linear} gives a graph representation of a matrix, each of whose columns of weight two.
An $m \times n$ matrix is described by an associated graph of $m$ vertices and $n$ edges.
The vertices $\mathsf{v}_i$ and $\mathsf{v}_j$ are connected by the edge $\mathsf{e}_k$ iff the $(i,k)$-entry and the $(j,k)$-entry of the matrix are non-zero.

The {\it cycle matrix}\footnote{Since its associated graph is cycle, we call it cycle matrix.} $\mat{C} \in \mathbb{F}^{k \times k}_q$ has the following form:
\begin{equation*}
\arraycolsep=3pt
  \mat{C}= \left(
  \begin{array}{@{\hskip2pt}ccccc@{\hskip2pt}}
   \gamma_{1} & 0        & 0        & \cdots     & \beta_{k}   \\
    \beta_{1}  & \gamma_{2} & 0        & \cdots     & 0         \\ 
    0        & \beta_{2}  & \gamma_{3} & \cdots     & 0         \\
    \vdots   & \ddots   & \ddots   & \ddots     & \vdots    \\
    0        & \cdots   & 0        & \beta_{k-1} & \gamma_{k} \\
  \end{array}
    \right),
\end{equation*}
where $\beta_{i}$ and $\gamma_{i}$ is a non-zero element over $\mathbb{F}_q$.

\begin{re.}\label{re:cycle}
  Consider a $k \times k$ cycle matrix $\mat{C}$ over $\mathbb{F}_2$.
  Then, the rank of $\mat{C}$ is $k-1$.
  In other words, all cycle matrices over $\mathbb{F}_2$ are always singular.
\end{re.}

The preprocessing stage of the HZ-EA \cite{huang2006linear} transforms $\mat{H}$ into $\mat{H}^{\rm HZ}$:
\begin{equation*}
  \mat{H}^{\rm HZ}
  =
  \left(
  \begin{array}{cccccc}
    \mat{C}_1 & \mat{E}_{1} & \multicolumn{2}{c}{\multirow{2}{*}{{\Large$\mat{O}$}}} & & \\
             & \mat{D}_{2} & \ddots & & \multicolumn{2}{c}{\multirow{2}{*}{$\mat{H}^{\rm HZ}_{I}$}} \\
    \multicolumn{2}{c}{\multirow{2}{*}{{\Large$\mat{O}$}}} & \ddots & \mat{E}_{r-1} & &  \\
    & & & \mat{D}_{r} & & \\
  \end{array}
  \right),
\end{equation*}
where $\mat{C}_{1}$ is an $m_1\times m_1$ cycle matrix,
$\mat{D}_{i}$ is an $m_i \times m_i$ diagonal matrix ($i=2,3,\dots, r$),
and $\mat{E}_{i}$ is $m_{i} \times m_{i+1}$ matrix ($i=1,2,\dots,r-1$),
and $\mat{H}^{\rm HZ}_{I}$ is $m \times (n-m)$ matrix, respectively.
Note that $\textstyle\sum_{i=1}^{r} m_i = m$.

\subsubsection{Solving Algorithm $\mat{C} \vec{\omega}^T = \vec{b}^T$ and Its Complexity}\label{sec:2.4}
We present an algorithm to solve $\mat{C} \vec{\omega}^T = \vec{b}^T$, where $\vec{b} \in \mathbb{F}^{k}_q$ is known.
Note that this algorithm has a smaller complexity than the algorithm in \cite{huang2006linear}, which requires $4k-3$ multiplications and $3k-5$ additions ($k\ge 3$).

To reduce the complexity, predetermine
\begin{align*}
  \epsilon_i &:= \beta_i\gamma_i^{-1} ~~(i=1,2,\dots,k), \quad
  \iota := 1 + \epsilon_1 \epsilon_2 \cdots \epsilon_{k}, \\
  \eta_i &:= \epsilon_{k} \epsilon_1 \epsilon_2 \cdots \epsilon_{i-1} ~~(i=1,2,\dots,k-1).
\end{align*}
We denote substituting $j$ into $i$ by $i \gets j$.
Table \ref{tab:hyou2} shows the algorithm to solve $\mat{C}\vec{\omega}^T=\vec{b}^T$.
The total number of multiplications $\mu_{\rm c}$ and additions $\alpha_{\rm c}$ are
\begin{align}
  &\mu_{\rm c} = 3k-1, &
  &\alpha_{\rm c}  = 2(k-1).
  \label{eq:cyc_comp}
\end{align}

\begin{table}[t]
  \begin{center}
    \caption{Algorithm to solve $\mat{C}\vec{\omega}^T=\vec{b}^T$}
    \begin{tabular}{|l|c|c|}
      \hline
      Operation     & $\mu$  & $\alpha$  \\ \hline
      $z_1 \gets b_1$     & $0$    & $0$  \\ \hline
      $z_i \gets b_{i} +\epsilon _{i-1}z_{i-1}$ $(i=2,...,k)$ & $k-1$ & $k-1$ \\ \hline
      $y_{k} \gets z_{k}\iota^{-1}$  & $1$    & $0$   \\ \hline
      ${\omega}_{k} \gets \gamma_{k}^{-1}y_{k}$    & $1$    & $0$   \\ \hline
      ${\omega}_{i} \gets \gamma^{-1}_i (z_{i} - y_{k}\eta_i)$ $(i=1,...,k-1)$ & $2k-2$ & $k-1$ \\ \hline
    \end{tabular}
    \label{tab:hyou2}
  \end{center}
\end{table}

\section{Proposed Encoding Algorithm}\label{sec:3}
In this section, we propose an EA by block triangularization.
Section \ref{ssec:overview} gives an overview of the proposed EA.
Sections \ref{sec:3-2} and \ref{sec:3-3} present the preprocessing and encoding stages of this EA, respectively.
Section \ref{sec:3-4} evaluates the encoding complexity.
Section \ref{sec:3-5} shows that this EA is a generalization of the HZ-EA.

\subsection{Overview \label{ssec:overview}}
The preprocessing stage of the proposed EA transforms a given parity check matrix $\mat{H}$ into a block-triangular matrix $\mat{H}'$ by row and column permutations, where
\begin{equation}
  \label{eq:H'}
  \mat{H}'=
  \begin{pmatrix}
    \mat{F}_{1} & \mat{K}_{1,2}   & \cdots   & \! \mat{K}_{1,\ell-1} \! & \mat{K}_{1,\ell}    & \mat{H}'_{I,1} \\
    \mat{O}      & \mat{F}_{2}   & \cdots   & \! \mat{K}_{2,\ell-1} \! & \mat{K}_{2,\ell}   & \mat{H}'_{I,2} \\
    \vdots       & \vdots       & \ddots  & \vdots    & \vdots   & \vdots \\
    \mat{O}      & \mat{O}    &\cdots      & \! \mat{F}_{\ell-1}  \! & \mat{K}_{\ell-1,\ell}      & \mat{H}'_{I,\ell-1} \\
    \mat{O}      & \mat{O}    &\cdots      & \! \mat{O} & \mat{F}_{\ell}      & \mat{H}'_{I,\ell} \\
  \end{pmatrix}.
\end{equation}
Here, $\mat{F}_{i}$ is a non-singular $m_i \times m_i$ matrix, $\mat{K}_{i,j}$ and $\mat{H}'_{I,i}$ are of size $m_i \times m_j$ and $m_i \times (n-m)$, respectively.
Note that $\textstyle\sum_{i=1}^{\ell} m_i = m$.
To reduce the encoding complexity, diagonal block $\mat{F}_i$ forms a diagonal matrix as far as possible.

Suppose that the parity part $\vec{p}$ is split into $\ell$ parts $\vec{p}_i \in \mathbb{F}_q^{m_i}$ ($i=1,2,\dots, \ell$) 
as $\vec{p} = (\vec{p}_1, \vec{p}_2,\dots,\vec{p}_{\ell})$.
Then, the encoding stage decides the parity part by solving $\mat{H}'_P \vec{p} = - \mat{H}'_I \vec{u}$ for a given $\vec{u}$.
This equation is efficiently solved by the block backward-substitution (Sect. \ref{sec:3-3}).

In the following sections, we explain block backward-substitution can reduce the encoding complexity by an example (Sect.\ \ref{sssec:bss})
and briefly explain how to construct a block-triangular matrix (Sect.\ \ref{sssec:ext} and \ref{sssec:nsdb}).

\subsubsection{Reason to Make Block Triangular Matrix \label{sssec:bss}}
Assume $\mat{H}_{P}$ is decomposed in the following block matrix
\begin{equation*}
  \mat{H}_P
  =
  \begin{pmatrix}
    \mat{D} & \mat{K}  & \mat{S}_{u} \\
    \mat{O} & \mat{T}  & \mat{S}_{l} \\
    \mat{O} & \mat{V}  & \mat{N} \\
  \end{pmatrix},
\end{equation*}
where $\mat{D}$ is an $m_1 \times m_1$ diagonal matrix, $\mat{T}$ is an $(m_2-\delta)\times (m_2-\delta)$ triangular matrix,
and $\mat{N}$ is a $\delta \times \delta$ matrix.
Since
$\mat{T}_1 = \begin{pmatrix} \mat{D} & \mat{K} \\ \mat{O} & \mat{T} \end{pmatrix}$
is a triangular matrix, $\mat{H}_P$ is regarded as a large ATM, i.e.,
\begin{equation*}
  \left(
  \begin{array}{cc|c}
    \mat{D} & \mat{K} & \mat{S}_u \\
    \mat{O} & \mat{T} & \mat{S}_l \\ \hline
    \mat{O} & \mat{V} & \mat{N} \\
  \end{array}
  \right)
  =
  \begin{pmatrix}
    \mat{T}_1 & \mat{S} \\
    \mat{V}_1 & \mat{N} \\
  \end{pmatrix},
\end{equation*}
where $\begin{pmatrix}\mat{S}_u \\ \mat{S}_l \end{pmatrix} =: \mat{S}$ and
$\begin{pmatrix}\mat{O} & \mat{V}\end{pmatrix} =:  \mat{V}_1$.
In another interpretation,
since $\mat{A}_2 = \begin{pmatrix} \mat{T} & \mat{S}_l \\ \mat{V} & \mat{N} \end{pmatrix}$
is a small ATM, $\mat{H}_{P}$ is regarded as a block triangular matrix, i.e.,
\begin{equation*}
  \left(
  \begin{array}{c|cc}
    \mat{D} & \mat{K} & \mat{S}_{u} \\ \hline
    \mat{O} & \mat{T} & \mat{S}_{l} \\
    \mat{O} & \mat{V} & \mat{N} \\
  \end{array}
  \right)
  =
  \begin{pmatrix}
    \mat{D} & \mat{K}_2 \\
    \mat{O} & \mat{A}_2 \\
  \end{pmatrix},
\end{equation*}
where
$\begin{pmatrix} \mat{K} & \mat{S}_u \end{pmatrix} =: \mat{K}_2$.

Now, we compare the complexity to solve $\mat{H}_P \vec{p}^T = \vec{b}^T$ in the two interpretations above.
To simplify the discussion, we evaluate the complexity by the number of multiplications.
If we regard $\mat{H}_P$ as a large ATM, then the number of multiplications $\mu_1$ to solve $\mat{H}_P \vec{p}^T = \vec{b}^T$ is derived from Eq.~\eqref{eq:def-fm} as
\begin{align}
  \mu_1
  &=
  2( \wt(\mat{D}) + \wt(\mat{K}) + \wt(\mat{T})) + \wt(\mat{V})
  \notag \\
  &~~~~+ \wt(\mat{S}_u) + \wt(\mat{S}_l) + \wt\bigl(\mat{\Phi}_1^{-1}\bigr),
  \label{eq:mu1}
\end{align}
where $\mat{\Phi}_1 := \mat{N}- \mat{V}_1 \mat{T}_1^{-1}\mat{S}$.

If we regard $\mat{H}_P$ as a block-triangular matrix, then we split $\vec{p}$ into $\vec{p}_1 \in \mathbb{F}_q^{m_1}$ and $\vec{p}_2 \in \mathbb{F}_q^{m_2}$ and obtain the following system of linear equations:
\begin{align}
  \mat{A}_2 \vec{p}_2^T &= \vec{b}_2^T,  \label{eq:ex1-1} \\
  \mat{D} \vec{p}_1^T &= \vec{b}_1^T - \vec{K}_2 \vec{p}_2^T, \label{eq:ex1-2} 
\end{align}
where $\vec{b}_1$ (resp.\ $\vec{b}_2$) stands the first $m_1$ (resp.\ last $m_2$) elements of vector $\vec{b}$.
We solve Eq.~\eqref{eq:ex1-1} by the RU-EA's encoding stage and get $\vec{p}_2$.
Substituting $\vec{p}_2$ into Eq.~\eqref{eq:ex1-2}, we have $\vec{p}_1$.
This calculation is called block backward-substitution.
The total number of multiplications $\mu_2$ to derive $\vec{p}_2$ and $\vec{p}_1$ is
\begin{align}
  \mu_2
  &=
  2\wt(\mat{T}) + \wt(\mat{V}) + \wt(\mat{S}_l) + \wt\bigl(\mat{\Phi}_1^{-2}\bigr)
  \notag \\
  &~~~~+ \wt(\mat{K})+\wt(\mat{S}_u) + \wt(\mat{D}),
  \label{eq:mu2}
\end{align}
where $\mat{\Phi}_2 := \mat{N}- \mat{V} \mat{T}^{-1}\mat{S}_l$.

By subtracting Eq.~\eqref{eq:mu2} from Eq.~\eqref{eq:mu1}, we get
\begin{equation*}
  \mu_1 - \mu_2
  =
  \wt(\mat{D}) +  \wt(\mat{K})  + \wt(\mat{\Phi}_1^{-1})  - \wt(\mat{\Phi}_2^{-1}).
\end{equation*}
Since both $\mat{\Phi}_1$ and $\mat{\Phi}_2$ are $\delta \times \delta$ matrices,
we assume $\wt(\mat{\Phi}_1^{-1})  \approx \wt(\mat{\Phi}_2^{-1})$.
From this, we can reduce about $\wt(\mat{D}) +  \wt(\mat{K})$ multiplications by the block backward-substitution.

Summarizing above, if we obtain a block-triangular matrix whose several diagonal-blocks are diagonal matrices, we can reduce the encoding complexity.
In the following section, we briefly explain how to make such a matrix.

\subsubsection{Extraction of Diagonal Matrices \label{sssec:ext}}
Consider the case that $\mat{H}$ does not contain columns of weight 1.
By a method explained in Sect.~\ref{sssec:nsdb}, we extract a non-singular matrix $\mat{F}_1$ from $\mat{H}$ and get
\begin{equation*}
  \mat{H}_1
  =
  \begin{pmatrix}
    \mat{F}_1 & \mat{M}_1 \\
    \mat{O} & \mat{W}_1
  \end{pmatrix},
\end{equation*}  
by performing row and column permutation to $\mat{H}$.

Since $\mat{M}_1\neq \mat{O}$ in general, there is a possibility that $\mat{W}_1$ contains columns of weight 1.
In such case, by rearranging the rows and columns of $\mat{H}_1$, we extract diagonal matrix $\mat{D}_2$ from $\mat{W}_1$ as
\begin{equation*}
  \mat{H}_2
  =
  \begin{pmatrix}
    \mat{F}_1 & \mat{K}_{1,1} & \mat{M}_1 \\
    \mat{O} & \mat{D}_2 & \mat{M}_2 \\
    \mat{O} & \mat{O}  & \mat{W}_2 \\    
  \end{pmatrix}.
\end{equation*}  
If $\mat{W}_1$ does not contain columns of weight 1, we extract a non-singular matrix $\mat{F}_2$ from $\mat{W}_2$ by a similar way to $\mat{F}_1$:
\begin{equation*}
  \mat{H}_2'
  =
  \begin{pmatrix}
    \mat{F}_1 & \mat{K}_{1,1} & \mat{M}_1 \\
    \mat{O} & \mat{F}_2 & \mat{M}_2 \\
    \mat{O} & \mat{O}  & \mat{W}_2 \\    
  \end{pmatrix}.
\end{equation*}
By repeating the above process, finally we obtain the block-triangular matrix.

When $\mat{H}$ contains columns of weight 1, firstly we extract a diagonal matrix $\mat{D}_1$ and repeat the same process.
Then we get a block-triangular matrix.

\subsubsection{Extraction of Non-Singular Diagonal Blocks \label{sssec:nsdb}}
Consider the case that matrix $\mat{W}$ does not contain columns of weight 1.
If we choose appropriate $m_1$, singly bordered block-diagonalization (Sect.~\ref{sec:2-c}) to $\mat{W}$ gives
\begin{equation*}
  \mat{W}' = \mat{P}\mat{W}\mat{Q}
  =
  \begin{pmatrix}
    \mat{B}_1 & \mat{O}   & \mat{Z}_1 \\
    \mat{O}   & \mat{B}_2 & \mat{Z}_2 \\    
  \end{pmatrix},
\end{equation*}
where $\mat{B}_1$ is $m_1 \times n_1$ full-rank horizontal submatrix.
Since $\rank(\mat{B}_1) = m_1$, we extract a non-singular $m_1\times m_1$ matrix $\mat{F}_1$ by rearranging the rows of $\mat{B}_1$ as follows:
\begin{equation*}
  \mat{W}'\mat{Q}'
  =
  \left(
  \begin{array}{c|ccc}
    \mat{F}_1 & \mat{R}_1 & \mat{O}   & \mat{Z}_1 \\ \hline
    \mat{O}   & \mat{O}   & \mat{B}_2 & \mat{Z}_2 \\    
  \end{array}
  \right)
  =
  \begin{pmatrix}
    \mat{F}_1 & \mat{M}_1 \\
    \mat{O}   & \mat{W}_1 \\
  \end{pmatrix}.
\end{equation*}
Summarizing above, singly bordered block-diagonalization allows us to extract a non-singular matrix for binary and non-binary codes.

In the case of non-binary codes, we can also extract a non-singular matrix by cycle detection in the associate graph.
For a given matrix $\mat{W}$, construct a submatrix $\mat{\widetilde{C}}$ which consists of all the columns of weight 2 in $\mat{W}$.
We detect one of the smallest cycles (e.g., see \cite{itai1978finding}) in the associated graph for $\mat{\widetilde{C}}$.
Since cycles in an associated graph correspond to the cycle matrices, we get the following matrix by moving the rows and columns corresponding to the smallest cycle in the associated graph:
\begin{equation*}
  \begin{pmatrix}
    \mat{C}_1 & \mat{M}_1 \\
    \mat{O}   & \mat{W}_1 \\
  \end{pmatrix},
\end{equation*}
where $\mat{C}_1$ is a cycle matrix.
Hence, we can extract a cycle, i.e., non-singular, matrix by cycle detection in the associate graph.
Note that this method cannot extract a non-singular matrix for the binary code by the reason in Remark \ref{re:cycle}.

\subsection{Preprocessing Stage of Proposed EA}\label{sec:3-2}
Now, we explain the details of the preprocessing stage of the proposed EA.
This stage is realized by an iterative algorithm.
Let $t$ stand for the round of this iterative algorithm.
Let $\mat{H}^{(t)}$ be the parity-check matrix at the $t$-th round.
This stage performs row and column permutations to $\mat{H}^{(t)}$ at each round.
This stage determines $\mat{H}'$ given in Eq.~\eqref{eq:H'} from front columns and rear columns at each round.
We denote the number of determined front columns and rear columns at round $t$, by $f_t$ and $g_t$, respectively. 

As an example, Figure \ref{fig1} depicts $\mat{H}^{(3)}$.
The pink blocks of Fig.~\ref{fig1} express determined submatrices.
We refer to the blue block of Fig.~\ref{fig1}, i.e., $\mat{W}^{(t)}$, as \textit{working space}.

\begin{figure}[t]
\centering
\includegraphics[width=78mm]{./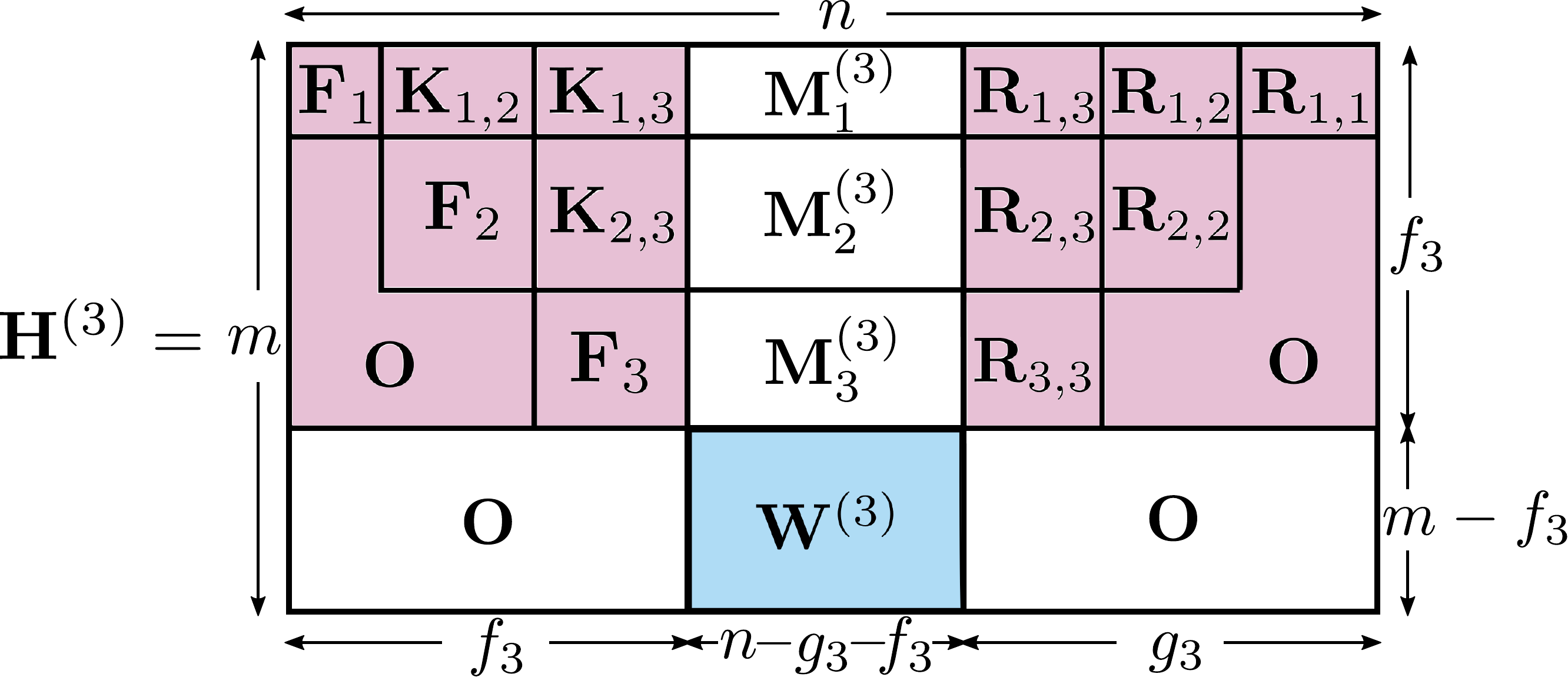} 
\caption{A matrix $\mat{H}^{(3)}$}
\label{fig1}
\end{figure}

\begin{figure}[!t]  
\centering
\includegraphics[width=\linewidth]{./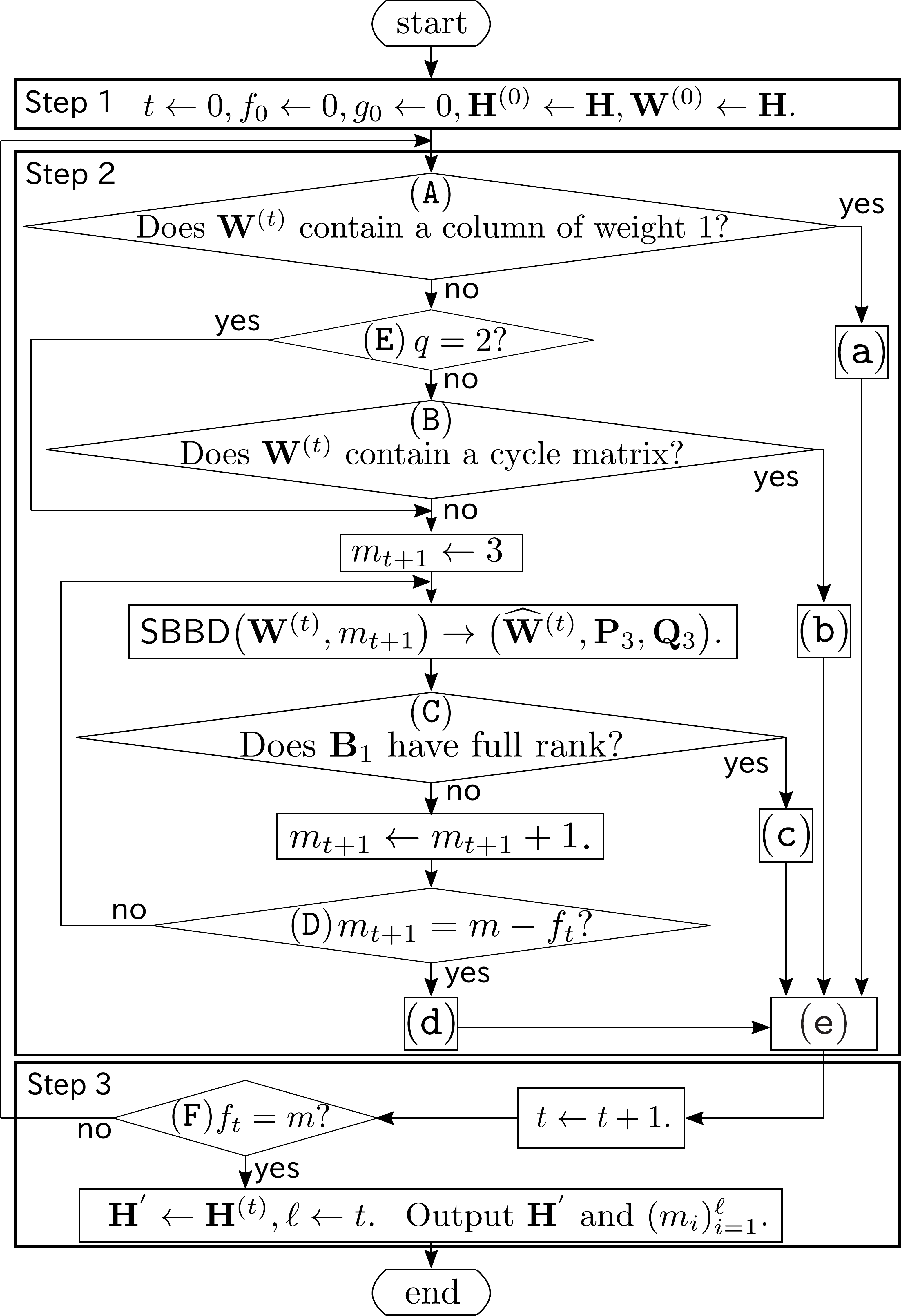}
\caption{Flowchart of preprocessing stage of proposed EA}
\label{fig2}
\end{figure}

Figure \ref{fig2} shows the flowchart of the preprocessing stage.
This stage is divided into three steps: initialization (Step 1), iteration (Step 2), and end process (Step 3).
In Step 2, perform row and column permutations to $\mat{H}^{(t)}$ at round $t$.
Roughly speaking, Step 2 forms the diagonal block $\mat{F}_{t+1}$ at round $t$.
Condition (\texttt{A}) gives a decision whether we can make $\mat{F}_{t+1}$ a diagonal matrix.
Condition (\texttt{E}) and (\texttt{B}) decide whether we can make $\mat{F}_{t+1}$ a cycle matrix.
Unless $\mat{F}_{t+1}$ becomes a diagonal or a cycle matrix, we form $\mat{F}_{t+1}$ into a small ATM.
To make a small ATM, we adjust $m_{t+1}$, which decides the size of $\mat{F}_{t+1}$.
If $\mat{F}_{t+1}$ cannot become a small ATM, i.e., Condition (\texttt{D}) is satisfied, we make a large ATM in Sub-step (\texttt{d}).
In Step 3, this stage outputs the result if this stage satisfies Condition (\texttt{F}), i.e., $f_t = m$.
Otherwise, return to Step 2.
The details of Sub-steps (\texttt{a})-(\texttt{e}) are as follows.

\paragraph{Sub-step (\texttt{a})}
This sub-step forms a diagonal matrix at the upper-left of $\mat{W}^{(t)}$.
We write $d_t$ for the number of the columns of weight 1 in $\mat{W}^{(t)}$.
This stage performs the column permutation $\mat{Q}_1$ into $\mat{W}^{(t)}$ for moving $d_t$ columns of weight 1 to the front of $\mat{W}^{(t)}$.
This stage applies the row permutation $\mat{P}_1$ for moving these non-zero entries to the upper-left of $\mat{W}^{(t)}$.
As a result, this stage gets
\begin{equation}
  \label{eq:a1}
  \mat{P}_1\mat{W}^{(t)}\mat{Q}_1 = 	
  \begin{pmatrix}
    \mat{\widetilde D}_{t+1} & \mat{M}_{t+1}^{(t+1)}  \\
    \mat{O}    &  \mat{W}^{(t+1)}  \\
  \end{pmatrix},
\end{equation}
where all the columns of $\mat{\widetilde D}_{t+1} \in \mathbb{F}^{m_{t+1} \times d_{t}}_q$ are weight 1.

Next, to extract an $m_{t+1} \times m_{t+1}$ diagonal matrix from $\mat{\widetilde D}_{t+1}$, move $m_{t+1}$ columns forming a diagonal matrix to the front of $\mat{W}^{(t)}$ and move the residual ($d_{t} - m_{t+1}$) columns to the rear of $\mat{W}^{(t)}$ by $\mat{Q}_2$, i.e.,
\begin{equation}
  \label{eq:a2}
  \mat{P}_1\mat{W}^{(t)}\mat{Q}_1\mat{Q}_2 = 	
  \begin{pmatrix}
    \mat{D}_{t+1} & \mat{M}_{t+1}^{(t+1)} & \mat{R}_{t+1,t+1} \\
    \mat{O}    &  \mat{W}^{(t+1)} & \mat{O} \\
  \end{pmatrix},
\end{equation}
where $\mat{D}_{t+1}$ is an $m_{t+1} \times m_{t+1}$ diagonal matrix and $\mat{R}_{t+1,t+1}$ is an $m_{t+1} \times (d_t - m_{t+1})$ matrix.
Then, set $f_{t+1} \gets f_t + m_{t+1}$, $g_{t+1} \gets g_t + (d_t - m_{t+1})$, $\mat{P}^{(t)}=\mat{P}_1$, and $\mat{Q}^{(t)}=\mat{Q}_1\mat{Q}_2$.

\paragraph{Sub-step (\texttt{b})}
This sub-step extracts one of the smallest cycle matrices at the upper-left of $\mat{W}^{(t)}$.
First, this stage detects one of the smallest cycle matrices in $\mat{W}^{(t)}$.
Let $c_t$ be the number of the columns of weight 2 in $\mat{W}^{(t)}$.
Denote the $(m - f_t) \times c_t$ submatrix which consists of all the columns of weight 2, by $\mat{\widetilde C}_{t+1}$.
This stage detects the smallest cycle (e.g., see \cite{itai1978finding}) in the associated graph for $\mat{\widetilde C}_{t+1}$.

Next, this stage forms the cycle matrix at the upper-left of $\mat{W}^{(t)}$ by moving the rows and columns corresponding to the smallest cycle in the associated graph:
\begin{equation}
  \label{eq:b1}
  \mat{P}^{(t)}\mat{W}^{(t)}\mat{Q}^{(t)} = 	
  \begin{pmatrix}
    \mat{C}_{t+1} & \mat{M}_{t+1}^{(t+1)} \\
    \mat{O}    &  \mat{W}^{(t+1)} \\
  \end{pmatrix},
\end{equation}
where $\mat{C}_{t+1}$ is the $m_{t+1} \times m_{t+1}$ cycle matrix.
Set $f_{t+1} \gets f_t + m_{t+1}$ and $g_{t+1} \gets g_t$.

\paragraph{Sub-step (\texttt{c})}
The purpose of this sub-step is to form a small ATM at the upper-left of $\mat{W}^{(t)}$.
Let $n_{t+1}$ be the number of columns of $\mat{B}_1$.
Execute $\mathsf{ATM}(\mat{B}_1) \to (\mat{A}_{t+1},\mat{P}_4,\mat{Q}_4)$ and get $\mat{\widetilde W}^{(t)}$ as
\begin{align}
  \label{eq:c1}
  \mat{\widetilde W}^{(t)} 
  &=
  \begin{pmatrix} \mat{P}_4 & \mat{O} \\ \mat{O} & \mat{I}_{m - f_t - m_{t+1}} \end{pmatrix}
  \mat{\widehat W}^{(t)}
  \begin{pmatrix} \mat{Q}_4 & \mat{O} \\ \mat{O} & \mat{I}_{n - f_t - g_t - n_{t+1}} \end{pmatrix}
  \notag \\&=
  \left(
  \begin{array}{cccc}
    \mat{A}_{t+1} & \mat{R}_{t+1,t+1} & \mat{O} & \ \ \widetilde{\mat{Z}}_1
     \\
    \mat{O}    & \mat{O} & \mat{B}_2 & \ \ \mat{Z}_2  \\
  \end{array}
  \right),
\end{align}
where $\mat{A}_{t+1}$ is an $m_{t+1} \times m_{t+1}$ ATM, $\mat{I}_k$ represents the $k\times k$ identity matrix, and $\widetilde{\mat{Z}}_1 := \mat{P}_4\mat{Z}_1$.
This stage moves $\mat{R}_{t+1,t+1}$ to the rear of the matrix by a suitable column permutation $\mat{Q}_5$:
\begin{align}
  \mat{\widetilde W}^{(t)}\mat{Q}_5 &=
  \begin{pmatrix}
    \mat{A}_{t+1} & \mat{O} & \widetilde{\mat{Z}}_1 & \mat{R}_{t+1,t+1}\\
    \mat{O} & \mat{B}_2 & \mat{Z}_2 & \mat{O}
  \end{pmatrix}\nonumber
  \\&=
  \begin{pmatrix}
    \mat{A}_{t+1} & \mat{M}_{t+1}^{(t+1)} & \mat{R}_{t+1,t+1} \\
    \mat{O}    &  \mat{W}^{(t+1)} & \mat{O} \\
  \end{pmatrix}\label{eq:c2}.
\end{align}
Set $f_{t+1} \gets f_t + m_{t+1}$, $g_{t+1} \gets g_t + (n_{t+1} - m_{t+1})$ and
\begin{align*}
&\mat{P}^{(t)} :=
\begin{pmatrix}
  \mat{P}_4 & \mat{O}\\
  \mat{O} & \mat{I}_{m-f_{t+1}}
\end{pmatrix}
\mat{P}_3,
\\
&\mat{Q}^{(t)}:=\mat{Q}_3
\begin{pmatrix}
  \mat{Q}_4 & \mat{O}\\
  \mat{O} & \mat{I}_{n-f_t-g_t-n_{t+1}}
\end{pmatrix}
\mat{Q}_5.
\end{align*}

\paragraph{Sub-step (\texttt{d})}
Execute $\mathsf{ATM}\bigl(\mat{W}^{(t)}\bigr) \to \bigl(\mat{A}_{t+1},\mat{P}^{(t)},\mat{Q}^{(t)}\bigr)$
and get
\begin{equation*}
  \mat{P}^{(t)}\mat{W}^{(t)}\mat{Q}^{(t)} = 	
  \begin{pmatrix}
    \mat{A}_{t+1} & \mat{R}_{t+1,t+1} \\
  \end{pmatrix},
\end{equation*}
where $\mat{A}_{t+1}$ is the $m_{t+1} \times m_{t+1}$ ATM.
Set $f_{t+1} \gets f_t + m_{t+1}$ and $g_{t+1} \gets g_t$.

\paragraph{Sub-step (\texttt{e})}
Perform
\begin{equation*}
  \mat{H}^{(t+1)} \gets    
  \begin{pmatrix}
    \mat{I}_{f_{t}} & \mat{O} \\
    \mat{O}    & \mat{P}^{(t)} \\
  \end{pmatrix}
  \mat{H}^{(t)}
  \begin{pmatrix}
    \mat{I}_{f_{t}} & \mat{O} & \mat{O} \\
    \mat{O}    & \mat{Q}^{(t)} & \mat{O} \\
    \mat{O}    & \mat{O} & \mat{I}_{g_{t}} \\
  \end{pmatrix}.
\end{equation*}
By the permutation $\mat{Q}^{(t)}$, the matrix $\mat{M}_{i}^{(t)}$ for $1\le i \le t$ is transformed as 
$\mat{M}_{i}^{(t)}\mat{Q}^{(t)} = \Bigl(\mat{K}_{i,t+1} \,\ \mat{M}_{i}^{(t+1)} \,\ \mat{R}_{i,t+1}\Bigr)$,
where $\mat{K}_{i,t+1}$, $\mat{M}_{i}^{(t+1)}$ and $\mat{R}_{i,t+1}$ are of size $m_i\times m_{t+1}$, $m_i \times (n-f_{t+1}-g_{t+1})$ and $m_{i} \times (g_{t+1}-g_{t})$, respectively.

\begin{re.}\label{re:subm}
  In Sub-steps (\texttt{c}) and (\texttt{d}), we can use any preprocessing stage of EAs for binary and non-binary irregular LDPC codes (e.g., \cite{kaji2006encoding}).
  If we use the K-EA \cite{kaji2006encoding}, for ATMs with small gaps, we can reduce the complexity of proposed EA as shown in Sect.\ \ref{sec:4}.
\end{re.}

\begin{ex.}
\begin{figure}[!t]
  \centering
  \includegraphics[width=\linewidth]{./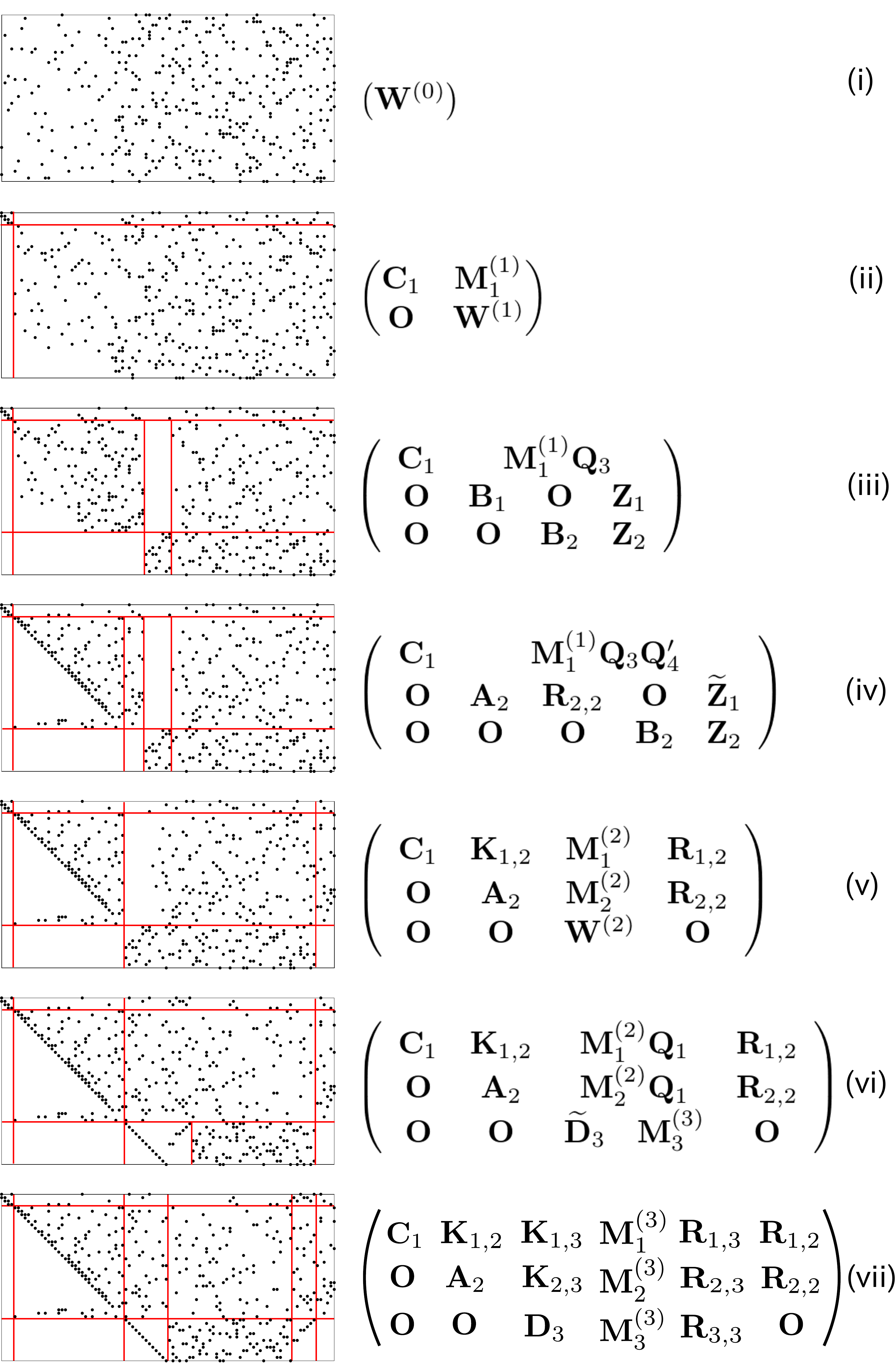}
  \caption{Demonstration of preprocessing stage}
  \label{fig:ex1}
\end{figure}

This example demonstrates this preprocessing stage.
The left figure of Fig.~\ref{fig:ex1}-(i) represents the input matrix.
In the left figures, the black dots stand non-zero elements and the white parts express zero elements.
Sub-blocks separated by red lines in the left figures correspond to submatrices in the right matrices.

We input a parity-check matrix $\mat{H} \in \mathbb{F}^{50 \times 100}_8$ as Fig.~\ref{fig:ex1}-(i).
This matrix contains $33$ columns of weight $2$, $67$ columns of weight $5$, $49$ rows of weight $8$, and a row of weight $9$.
In Step 1, set $t \gets 0$, $f_0 \gets 0$, $g_0 \gets 0$, $\mat{H}^{(0)} \gets \mat{H}$, and $\mat{W}^{(0)} \gets \mat{H}$.
Note that Condition (\texttt{E}) of Step 2 is always unsatisfied since $q=8$.

{\bf ($\mathbf{t=0}$)~~}
In Step 2, Condition (\texttt{A}) is not satisfied since $\mat{W}^{(0)}= \mat{H}$ does not contain a column of weight $1$.
Hence, go to the decision of Condition (\texttt{B}).
Condition (\texttt{B}) is satisfied because $\mat{W}^{(0)}$ contains a cycle matrix.
Hence, execute Sub-step (\texttt{b}).

In the execution of Sub-step (\texttt{b}), search one of the smallest cycle matrices in $\mat{W}^{(0)}$.
The smallest cycle matrix is of size $4 \times 4$.
By Eq.~\eqref{eq:b1}, transform Fig.~\ref{fig:ex1}-(i) into Fig.~\ref{fig:ex1}-(ii).
Set $f_1 \gets 4$ and $g_1 \gets 0$.
Go to Step 3.

In Step 3, increase $t \gets 1$.
Condition (\texttt{F}) is not satisfied since $f_1=4 \neq 50$.
Hence, return to Step 2.

{\bf ($\mathbf{t=1}$)~~}
In Step 2, Conditions (\texttt{A}) and (\texttt{B}) are not satisfied because $\mat{W}^{(1)}$ does not contain a column of weight $1$ and a cycle matrix.
Next, this stage increases $m_2$ until submatrix $\mat{B}_1$, obtained by $\mathsf{SBBD}\bigl(\mat{W}^{(1)},m_2\bigr) \to \mat{\widehat W}^{(1)}$, becomes full rank.
In this example, we get full rank submatrix $\mat{B}_1$ by executing $\mathsf{SBBD}\bigl(\mat{W}^{(1)},m_2=37\bigr) \to \mat{\widehat W}^{(1)}$.
By this execution, the parity-check matrix is transformed as Fig.~\ref{fig:ex1}-(iii).
Hence, go to Sub-step (\texttt{c}).

Transform Fig.~\ref{fig:ex1}-(iii) into Fig.~\ref{fig:ex1}-(iv) by Eq.~\eqref{eq:c1}, where 
$
\mat{Q}'_4 =
\begin{pmatrix}
  \mat{Q}_4 & \mat{O}\\
  \mat{O} & \mat{I}_{53}
\end{pmatrix}
$.
Then, Fig.~\ref{fig:ex1}-(iv) is transformed into Fig.~\ref{fig:ex1}-(v) by Eq.~\eqref{eq:c2}.
Set $f_2 \gets 41$ and $g_2 \gets 6$.
Go to Step 3.

In Step 3, increase $t \gets 2$.
Condition (\texttt{F}) is not satisfied since $f_2 = 41 \neq 50$.
Hence, return to Step 2.

{\bf ($\mathbf{t=2}$)}
In Step 2, because Condition (\texttt{A}) is satisfied, Sub-step (\texttt{a}) is executed.
Transforms Fig.~\ref{fig:ex1}-(v) into Fig.~\ref{fig:ex1}-(vi) by Eq.~\eqref{eq:a1}.
Moreover, transform Fig.~\ref{fig:ex1}-(vi) into Fig.~\ref{fig:ex1}-(vii) by Eq.~\eqref{eq:a2}.
Set $f_3 \gets 50$ and $g_3 \gets 13$.
Go to Step 3.

In Step 3, increase $t \gets 3$.
Condition (\texttt{F}) is satisfied since $f_3=50=m$.
Output $m_1=4, m_2=37, m_3=9$ and the matrix given in Fig.~\ref{fig:ex1}-(vii).
Here, the parity part $\mat{H}'_{P}$ and message part $\mat{H}'_I$ of the output matrix are
\begin{equation*}
  \begin{small}
  \mat{H}'_{P}
  =
  \begin{pmatrix}
    \mat{C}_1 & \mat{K}_{1,2} & \mat{K}_{1,3} \\
    \mat{O}   & \mat{A}_{2} & \mat{K}_{2,3} \\
    \mat{O}   & \mat{O}    & \mat{D}_{3} \\
  \end{pmatrix},
  \quad 
  \mat{H}'_{I}
  =
  \begin{pmatrix}
    \mat{M}_1^{(3)} & \mat{R}_{1,3} & \mat{K}_{1,2} \\
    \mat{M}_2^{(3)} & \mat{R}_{2,3} & \mat{K}_{2,2} \\
    \mat{M}_3^{(3)} & \mat{R}_{3,3} & \mat{O} \\
  \end{pmatrix}.
  \end{small}
\end{equation*}
\end{ex.}

\subsection{Encoding Stage of Proposed EA}\label{sec:3-3}
Recall that the preprocessing stage outputs the number of rows of submatrices $m_1, m_2, \dots, m_{\ell-1}, m_{\ell}$.
According to these values, we split the parity part $\vec{p}$ of the codeword into $\ell$ parts $\vec{p}_i \in \mathbb{F}^{m_i}_q$ ($i = 1, 2, \dots, \ell$).
Then, the codeword is expressed as $( \vec{p}_1 \, \vec{p}_2 \, ... \, \, \vec{p}_{\ell} \, \vec{u})$.
Combining Eq.~\eqref{eq:H'} and $\mat{H}'\vec{x}^T=\vec{0}^T$, we get the following system of linear equations:
\begin{align}
  \mat{F}_{1}\vec{p}^T_1 
  &= \vec{b}_{1}^T
  \hspace{2.2mm}:=
  - \mat{H}'_{I,1}\vec{u}^T
  -{\textstyle\sum_{j=2}^{\ell}\bigl(\mat{K}_{1,j}\vec{p}^T_{j}\bigr)}, 
  \label{eq:ren1} \\
  \mat{F}_{2}\vec{p}^T_2
  &= \vec{b}_{2}^{T}
  \hspace{2.2mm}:= - \mat{H}'_{I,2}\vec{u}^T
  -{\textstyle\sum_{j=3}^{\ell}\bigl(\mat{K}_{2,j}\vec{p}^T_{j}\bigr)}, 
  \label{eq:ren2} \\
  & \vdots
   \notag \\
  \mat{F}_{\ell-1}\vec{p}^T_{\ell-1}
  &= \vec{b}^T_{\ell-1}
  := - \mat{H}'_{I,\ell-1}\vec{u}^T -\mat{K}_{\ell-1, \ell}\vec{p}^T_{\ell}, 
  \label{eq:renl-1}\\
  \mat{F}_{\ell}\vec{p}^T_{\ell}
  &= \vec{b}^T_{\ell}
  \hspace{2.4mm}:= - \mat{H}'_{I,\ell}\vec{u}^T . 
  \label{eq:renl}
\end{align}
Firstly, we solve Eq.~\eqref{eq:renl} by efficient algorithm to solve $\mat{F}_{\ell} \vec{p}^T_{\ell}= \vec{b}^T_{\ell}$.
Next, substituting $\vec{p}_{\ell}$ into Eq.~\eqref{eq:renl-1}, we solve Eq.~\eqref{eq:renl-1}, i.e., $\mat{F}_{\ell-1} \vec{p}^T_{\ell-1}= \vec{b}^T_{\ell-1}$.
Similarly, we solve $\mat{F}_{i} \vec{p}_i^T = \vec{b}^T_i$ for $\ell-2, \ell-3, \dots, 1$.
Finally, we get the parity part $\vec{p} = (\vec{p}_1, \vec{p}_2, \dots \vec{p}_{\ell})$.

\subsection{Complexity of Proposed EA}\label{sec:3-4}
By summarizing Eqs.~\eqref{eq:def-fm}, \eqref{eq:def-fa}, and \eqref{eq:cyc_comp},
the number of multiplications $\mu_i$ and additions $\alpha_i$ to solve $\mat{F}_{i} \vec{p}^T_i = \vec{b}^T_i$ are
\begin{align*}
  &(\mu_i, \alpha_i) =
  \begin{cases}
    (m_i, 0), & (\mat{F}_{i}: \text{diag}),\\
    (3m_i - 1, 2m_i -2), & (\mat{F}_{i}: \text{cycle}),\\
    (f_m(\mat{F}_i), f_a(\mat{F}_i)),  & (\mat{F}_{i}: \text{ATM}).
  \end{cases}
\end{align*}
This notation gives the total number of multiplications $\mu'$ and additions $\alpha'$ for the encoding stage of the proposed EA:
\begin{align*}
  \mu' &= {\sum_{i=1}^{\ell} \left\{ \mu_i + \wt\Bigl(\bigl(\mat{K}_{i,i+1} \,\, \mat{K}_{i,i+2} \,\, \cdots \,\, \mat{K}_{i,\ell} \,\, \mat{H}'_{I,i}\bigr)\Bigl) \right\}}, \\
  \alpha' &= {\sum_{i=1}^{\ell} \left\{ \alpha_i + \mathcal{S}\Bigl(\bigl(\mat{K}_{i,i+1} \,\, \mat{K}_{i,i+2} \,\, \cdots \,\, \mat{K}_{i,\ell} \,\, \mat{H}'_{I,i}\bigr)\Bigr) \right\}}.
\end{align*}

\subsection{Property of Proposed EA}\label{sec:3-5}
In this section, we prove that the proposed EA is a generalization of the HZ-EA.
When the associated graph is a connected graph, we say the cycle code is \textit{proper}.
The following theorem shows that if the input of preprocessing stage is proper cycle code, the output matrix satisfies the same properties of the HZ-EA.
\begin{theorem}
  If the input matrix $\mat{H}$ is a non-binary parity-check matrix for a proper cycle code, the preprocessing stage of the proposed EA outputs $\mat{H}'$ satisfying
  (i) $\mat{F}_{1} = \mat{C}_{1}$,
  (ii) $\mat{F}_{i} = \mat{D}_{i}$ ($2\le i \le \ell$),
  (iii) $\mat{K}_{i-1,i} \neq \mat{O}$ ($2 \le i \le \ell$), and
  (iv) $\mat{K}_{j,i} = \mat{O}$ ($2 \le i \le \ell$, $1 \le j \le i-2$).
\end{theorem}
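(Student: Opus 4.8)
The plan is to read the preprocessing stage as an operation on the associated graph $G$ of $\mat{H}$, whose vertices are the $m$ rows and whose edges are the $n$ weight-two columns; properness means $G$ is connected. First I would note that a cycle code has every column of weight two, so $\mat{W}^{(0)}=\mat{H}$ has no weight-one column and Condition (\texttt{A}) fails at $t=0$; since $q>2$, Condition (\texttt{E}) fails as well, and since $\mat{H}$ has full rank $m$ the connected graph $G$ cannot be a tree and hence contains a cycle, so Condition (\texttt{B}) holds. Thus round $0$ runs Sub-step (\texttt{b}) and extracts a cycle matrix, giving $\mat{F}_1=\mat{C}_1$, which is claim (i). I would fix the vertex set $B_1$ of this cycle as the seed of a breadth-first search (BFS) in $G$.

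Next I would show that every subsequent round runs Sub-step (\texttt{a}). A column of $\mat{W}^{(t)}$ has weight one exactly when one of its two endpoints has been processed and the other has not, i.e.\ it is a boundary edge between $B_1\cup\cdots\cup B_t$ and the unprocessed vertices; as long as unprocessed vertices remain, connectivity of $G$ guarantees such an edge, so Condition (\texttt{A}) holds and cycle extraction is never revisited. This already yields $\mat{F}_i=\mat{D}_i$ for $i\ge 2$, i.e.\ claim (ii). I would then argue that one call to Sub-step (\texttt{a}) captures the \emph{entire} boundary: the rows of $\mat{\widetilde D}_{t+1}$ are exactly those carrying a weight-one entry, and since each weight-one column meets only one such row, a full $m_{t+1}\times m_{t+1}$ diagonal saturating all of them always exists. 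Therefore block $t+1$ is precisely the set of unprocessed vertices adjacent to $B_1\cup\cdots\cup B_t$, so $B_i$ is the BFS layer at distance $i-1$ from $B_1$.

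With this layering in hand, claims (iii) and (iv) follow by a distance argument. For $i\ge 2$ the parity columns of block $i$ are the chosen diagonal edges of $\mat{D}_i$, and such an edge joins its diagonal row $v\in B_i$ (at distance $i-1$) to its already-processed endpoint $u$; adjacency forces $\mathrm{dist}(u)\ge i-2$, while $u$ being processed forces $\mathrm{dist}(u)\le i-2$, so $u\in B_{i-1}$. Hence the off-diagonal entry of every block-$i$ parity column lies in row-block $i-1$, giving $\mat{K}_{j,i}=\mat{O}$ for $j\le i-2$ (claim (iv)) and, because the layer $B_i$ is non-empty, $\mat{K}_{i-1,i}\neq\mat{O}$ (claim (iii)). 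Chords, intra-layer edges, and the unused boundary edges are never selected as parity columns --- they are moved into $\mat{H}'_I$ --- so they cannot perturb the blocks $\mat{K}_{j,i}$.

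I expect the crux to be the second paragraph: showing that a single pass of Sub-step (\texttt{a}) absorbs the whole boundary, so that the blocks coincide with clean BFS layers rather than a partial peeling. This rests on $m_{t+1}$ equalling the number of rows bearing a weight-one entry, together with solvability of the diagonal-extraction matching; once this whole-layer property is established, contiguity of BFS layers (no empty layer can precede a non-empty one in a connected graph) and the distance bound make the remaining claims routine. A secondary point to dispatch is that round $0$ is the unique round invoking Sub-step (\texttt{b}), which again follows because a weight-one column is present at every later round.
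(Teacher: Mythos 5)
Your proof is correct. For claims (i) and (ii) it is essentially the paper's own argument: the same elimination of Conditions (\texttt{A}) and (\texttt{E}) and verification of (\texttt{B}) at round $0$, and the same use of connectedness to guarantee a weight-one column (equivalently, a boundary edge) at every later round --- the paper phrases this as a contradiction (if every column of $\mat{W}^{(1)}$ had weight $0$ or $2$, the associated graph would be disconnected), while you phrase it directly. For claims (iii) and (iv), however, your route is organized differently. The paper runs a round-by-round matrix induction: it tracks the invariant that every column of $\mat{M}_j^{(t)}$ has weight $0$ or $2$, so that a column selected into $\mat{D}_i$ (weight $1$ in the working space, total weight $2$ overall) must place its single remaining non-zero entry in row block $i-1$ and cannot place it in any block $j \le i-2$; the cases $i \ge 4$ are dispatched only by an appeal to similarity. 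You instead isolate one structural lemma --- Sub-step (\texttt{a}) absorbs the entire boundary, hence $B_i$ is exactly the BFS layer at distance $i-1$ from the cycle $B_1$ --- and then obtain (iii) and (iv) for all $i$ simultaneously from the distance computation $\mathrm{dist}(u)=i-2$ for the processed endpoint $u$ of each diagonal edge. The two arguments encode the same combinatorial fact (no unassigned edge ever skips a layer), but yours buys uniformity and rigor where the paper is informal: a single lemma plus a short distance argument replaces the open-ended induction, and termination of the iteration becomes transparent. What the paper's bookkeeping buys in exchange is that it stays entirely within the algorithm's matrix data structures and reads off exact block weights along the way (e.g., every column of $\mat{K}_{i-1,i}$ has weight exactly one). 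Finally, the step you flagged as the crux --- whole-layer capture --- is not a gap: it is built into the definition of Sub-step (\texttt{a}), where $m_{t+1}$ is precisely the number of rows met by the $d_t$ weight-one columns and one such column is chosen per row, exactly the trivially solvable matching you describe.
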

\begin{proof}
  First, we will show $\mat{F}_{1} = \mat{C}_{1}$.
  Note that $\mat{W}^{(0)} = \mat{H}$.
  Hence, in Step 2, Condition (\texttt{A}) is not satisfied since $\mat{H}$ is the parity check matrix of the cycle code.
  Since $\mat{H}$ is non-binary, Condition (\texttt{E}) is not satisfied.
  Consider the associate graph of $\mat{W}^{(0)}$.
  Note that the associate graph consists of $m$ nodes and $n$ edges.
  Since $n > m$, the associate graph includes at least one cycle.
  This leads that Condition (\texttt{B}) is satisfied.
  Hence, Sub-step (\texttt{b}) is executed.
  Then, $\mat{H}^{(1)}$ is transformed as follows:
  \begin{equation*}
    \mat{H}^{(1)}=
    \mat{P}^{(0)}\mat{H}^{(0)}\mat{Q}^{(0)} = 	
    \begin{pmatrix}
      \mat{C}_{1} & \mat{M}_{1}^{(1)}\\
      \mat{O}    &  \mat{W}^{(1)}\\
    \end{pmatrix}.
  \end{equation*}
  This equation leads $\mat{F}_{1} = \mat{C}_{1}$.

  Hypothesize $\mat{W}^{(1)}$ does not contain a column of weight 1.
  Then, each column has weight 0 or 2.
  If the $i$-th column of $\mat{W}^{(1)}$ has weight 0 (resp.\ 2), the $i$-th column of $\mat{M}_1^{(1)}$ has weight 2 (resp.\ 0).
  This leads that the associate graph of $\mat{H}$ is disconnected.
  This contradicts the input parity-check matrix is for a proper cycle code.
  Hence, $\mat{W}^{(1)}$ contains at least one column of weight 1.

  Since $\mat{W}^{(1)}$ contains a column of weight 1, Sub-step (\texttt{a}) and (\texttt{e}) is executed.
  The resulting matrix $\mat{H}^{(2)}$ is
  \begin{equation*}
    \mat{H}^{(2)}=
    \begin{pmatrix}
      \mat{C}_{1} & \mat{K}_{1,2} & \mat{M}_{1}^{(2)} & \mat{R}_{1,2} \\
      \mat{O}    & \mat{D}_2 & \mat{M}_2^{(2)} & \mat{R}_{2,2} \\
      \mat{O}    & \mat{O} & \mat{W}^{(2)} & \mat{O}\\      
    \end{pmatrix}.
  \end{equation*}
  From the procedure of Sub-step (\texttt{a}), all the columns of $\mat{K}_{1,2}$ have weight 1.
  Moreover, the columns of $\mat{M}_1^{(2)}$ have weight 0 or 2.
  Hence, $\mat{K}_{1,2} \neq 0$.
  
  Similar to the above, $\mat{W}^{(2)}$ contains at least one column of weight 1.
  Hence, Sub-step (\texttt{a}) is executed, and the resulting matrix $\mat{H}^{(3)}$ is
  \begin{equation*}
    \mat{H}^{(3)}=
    \begin{pmatrix}
      \mat{C}_{1} & \mat{K}_{1,2} & \mat{K}_{1,3} & \mat{M}_{1}^{(3)} & \mat{R}_{1,3} & \mat{R}_{1,2} \\
      \mat{O}    & \mat{D}_{2}   & \mat{K}_{2,3} & \mat{M}_{2}^{(3)} & \mat{R}_{2,3} & \mat{R}_{2,2} \\      
      \mat{O}    & \mat{O}      & \mat{D}_3    & \mat{M}_{3}^{(3)} & \mat{R}_{3,3} & \mat{O} \\
      \mat{O}    & \mat{O}      & \mat{O}      & \mat{W}^{(3)}    & \mat{O}      & \mat{O} \\
    \end{pmatrix}.
  \end{equation*}
  From the procedure of Sub-step (\texttt{a}), submatrix $\begin{pmatrix} \mat{K}_{1,3} & \mat{M}_1^{(3)} & \mat{R}_{1,3}\end{pmatrix}$ is obtained from a column permutation to $\mat{M}_{1}^{(2)}$.
  Hence, all the columns of $\mat{K}_{1,3}, \mat{M}_1^{(3)}, \mat{P}_{1,3}$ have weight 0 or 2.
  Since all the columns of $\mat{D}_3$ have weight 1, all the columns of $\mat{K}_{1,3}$ have weight 0 and all the columns of $\mat{K}_{2,3}$ have weight 1.
  Hence, $\mat{K}_{1,3}=\mat{O}$ and $\mat{K}_{2,3} \neq \mat{O}$.
  Moreover, from the procedure of Sub-step (\texttt{a}), the columns of $\mat{M}_2^{(3)}$ have weight 0 or 2.
  
  In a similar matter, we can show that for $i=4,5,\dots, \ell$: (ii) $\mat{F}_i = \mat{D}_i$, (iii) $\mat{K}_{i-1,i} \neq \mat{O}$, and (iv) $\mat{K}_{j,i} = \mat{O}$ ($j =1,2,\dots, i-2$).
\end{proof}

\section{Numerical Experiments}\label{sec:4}
This section compares the proposed EA, the RU-EA \cite{richardson2001efficient}, the K-EA \cite{kaji2006encoding} for the number of additions $\alpha$ and multiplications $\mu$ by numerical experiments.

\subsection{Comparison with Existing EAs}
\begin{table*}[t]
  \centering
  \caption{Average encoding complexity for $\mathcal{E}_8$ and $\mathcal{E}_2$ \label{tab:ave_comp}}
  \begin{tabular}{|c||c|c|c||c|c|c|}     \hline
    & \multicolumn{3}{c||}{Complexity for $\mathcal{E}_8$}
    & \multicolumn{3}{c|}{Complexity for $\mathcal{E}_2$}  \\ \hline
    $n$ & $\alpha_{\rm RU}/\mu_{\rm RU}$ (RU) & $\alpha_{\rm K}/\mu_{\rm K}$ (Kaji) & $\alpha'/\mu'$ (Proposed)
    & $\alpha_{\rm RU}$ (RU) & $\alpha_{\rm K}$ (Kaji) & $\alpha'$ (Proposed)
    \\ \hline
    $1000$    & $2614.82/3613.71$ & $1613.17/3110.06$     & ${\bf 1576.55}/{\bf 2577.55}$   
      & $4054.16$   & ${\bf 3195.16}$ & $3254.76$   \\ \hline
    $2000$    & $5234.19/7229.83$ & $3207.56/6201.20$     & ${\bf 3151.48}/{\bf 5152.48}$   
      & $8217.58$   & $7216.17$ & ${\bf 6670.03}$   \\ \hline
    $3000$    & $7853.93/10846.39$ & $4797.21/9287.67$     & ${\bf 4726.37}/{\bf 7727.37}$    
      & $12413.73$  & $12157.26$ & ${\bf 10190.95}$ \\ \hline
    $4000$    & $10472.86/14463.01$ & $6382.11/12370.26$     & ${\bf 6301.37}/{\bf 10302.37}$     
      & $16642.75$  & $17592.54$ & ${\bf 13795.31}$ \\ \hline
    $5000$    & $13095.01/18081.69$ & $7968.67/15453.35$     & ${\bf 7876.35}/{\bf 12877.35}$   
      & $20892.28$  & $24118.86$ & ${\bf 17473.06}$ \\ \hline
  \end{tabular}
\end{table*}

\begin{figure*}[t]
  \centering
  \subfigure[Additions for $\mathcal{E}_8$]{%
    \includegraphics[clip,width=.32\linewidth]{./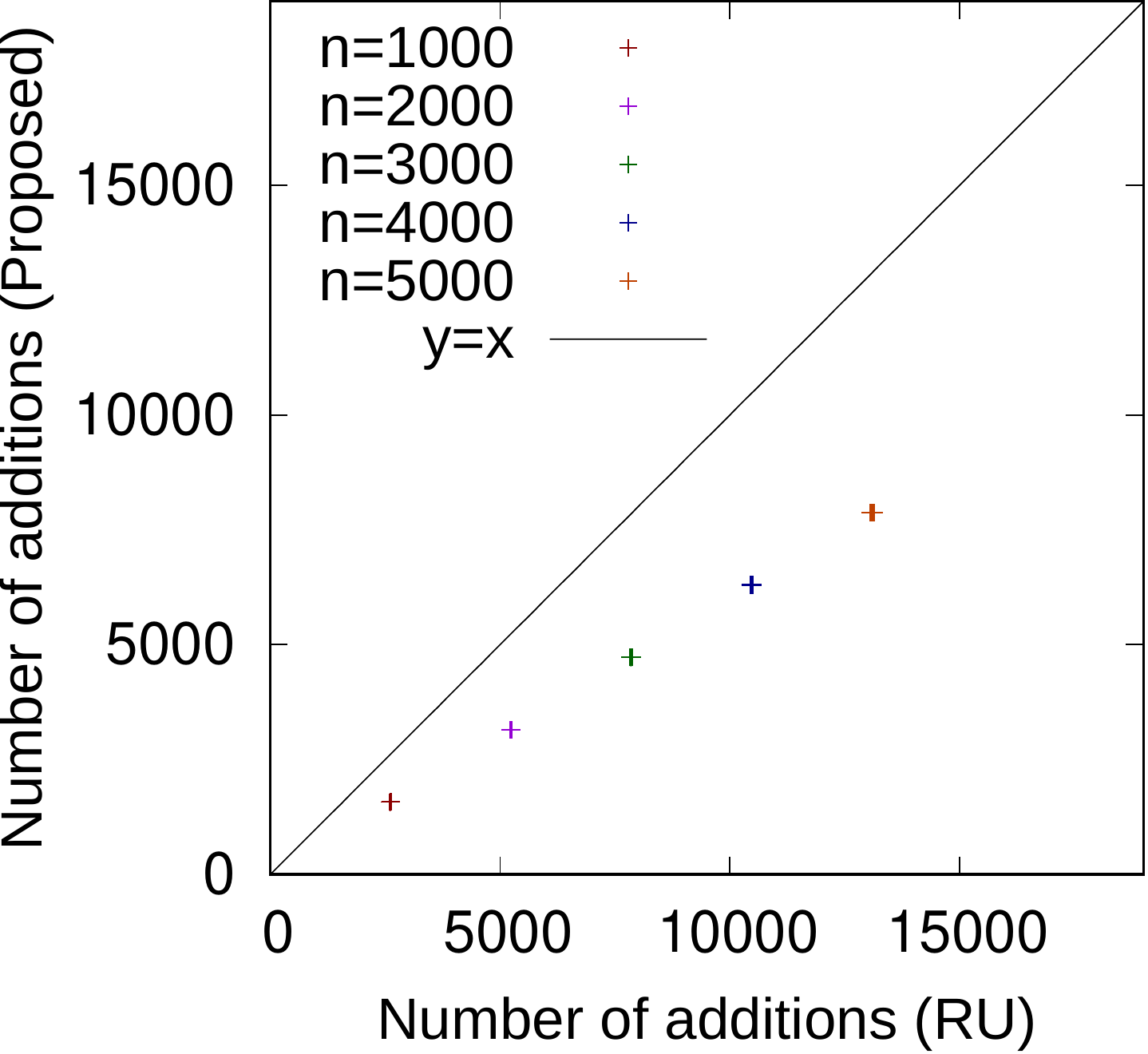} \label{fig:num1ar}
  }
  \subfigure[Multiplications for $\mathcal{E}_8$]{%
    \includegraphics[clip,width=.32\linewidth]{./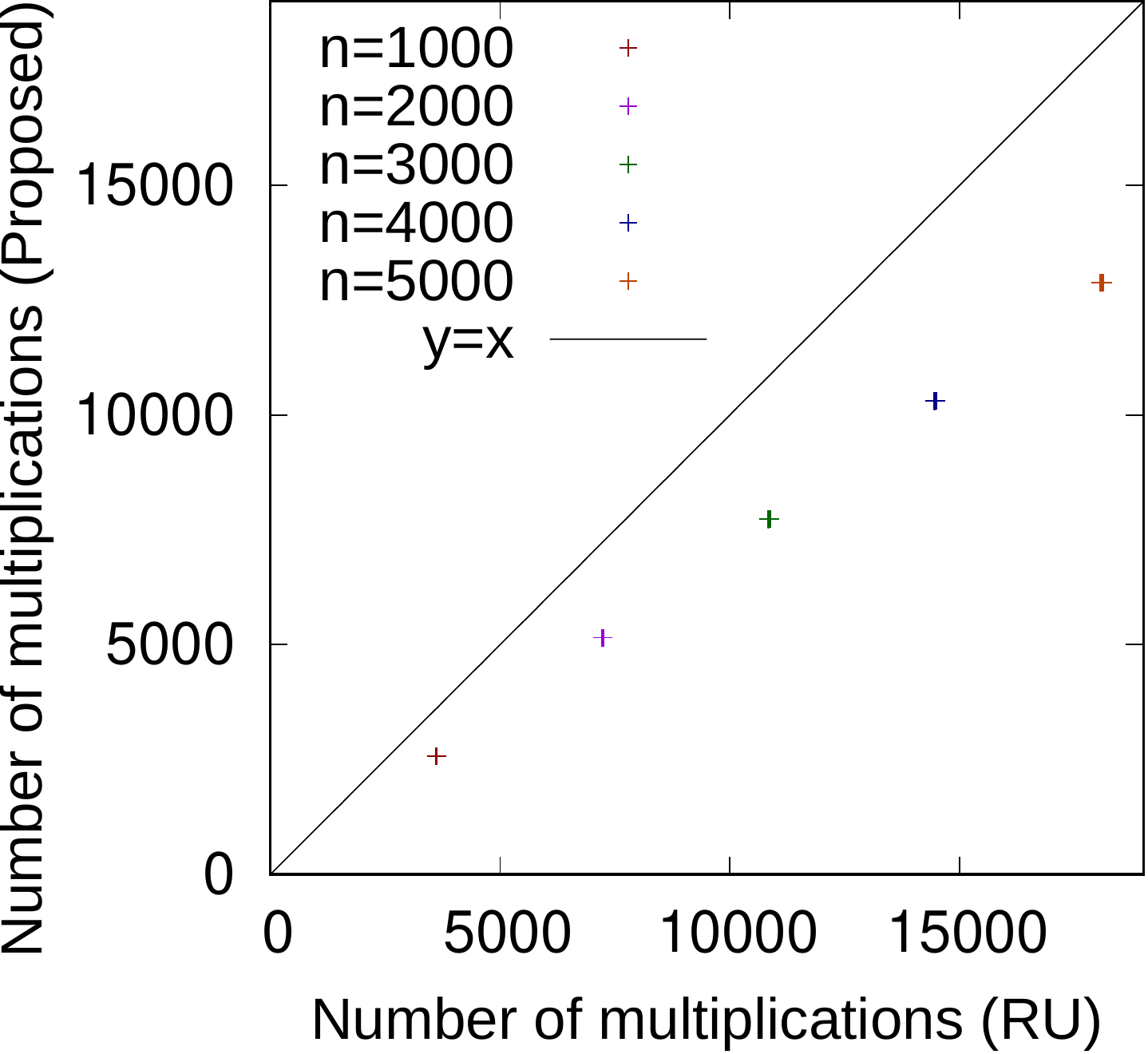} \label{fig:num1mr}
  }
  \subfigure[Additions for $\mathcal{E}_2$]{%
    \includegraphics[clip,width=.32\linewidth]{./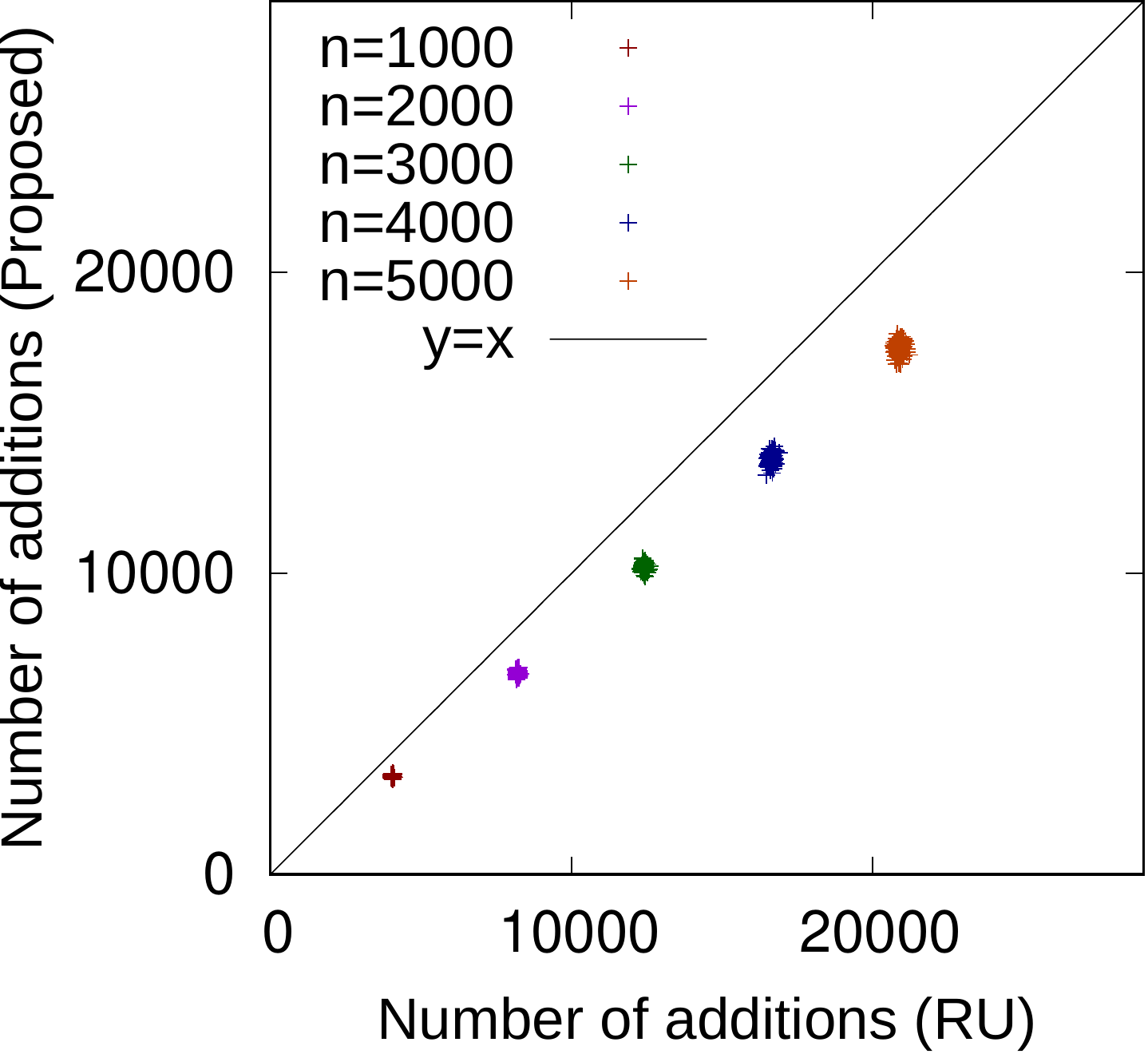} \label{fig:num2r}
  }
  \caption{Number of operations required in RU-EA and proposed EA \label{fig:num_r}}
\end{figure*}

\begin{figure*}[t]
  \centering
  \subfigure[Additions for $\mathcal{E}_8$]{%
    \includegraphics[clip,width=.32\linewidth]{./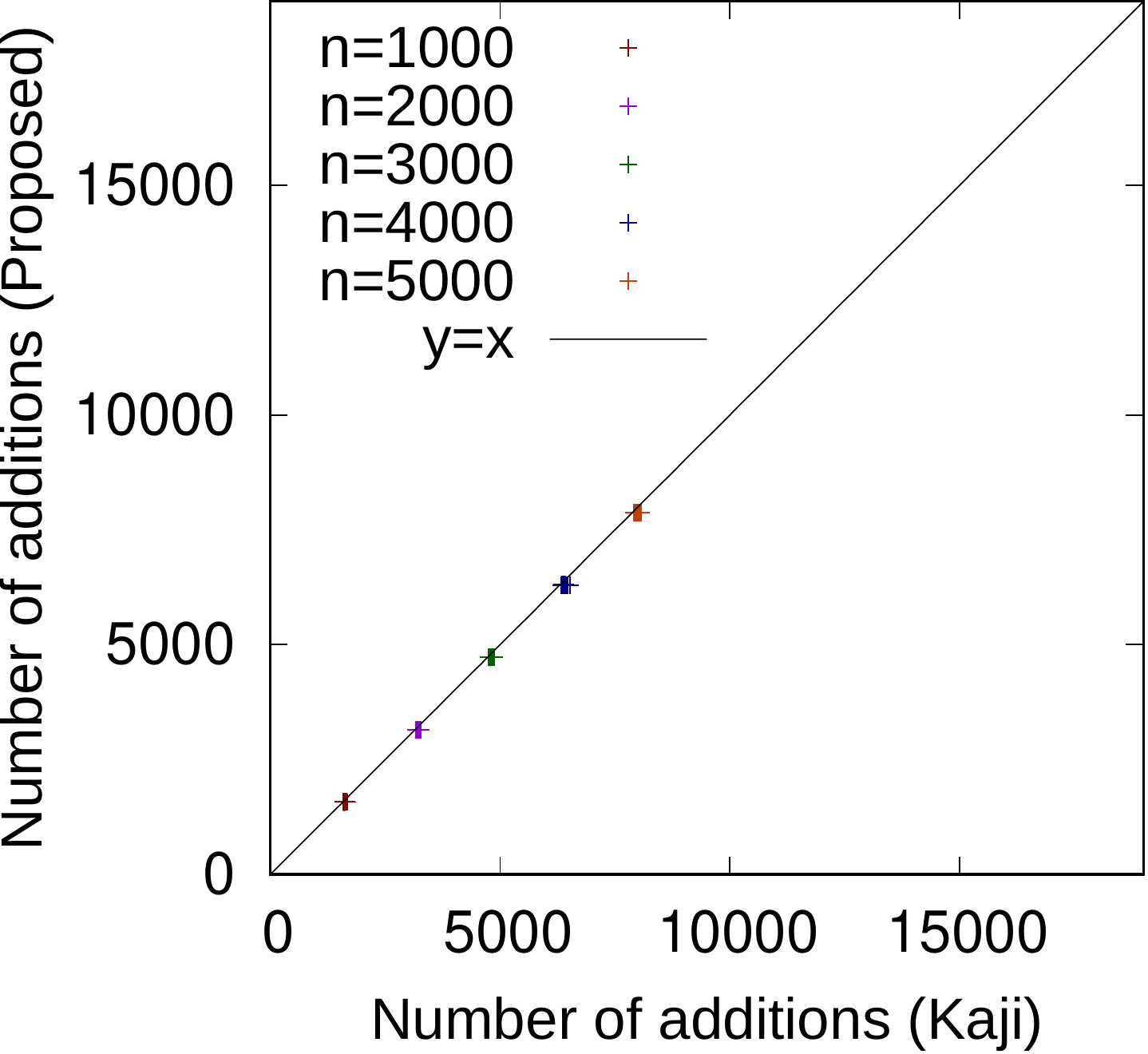} \label{fig:num1ak}
  }
  \subfigure[Multiplications for $\mathcal{E}_8$]{%
    \includegraphics[clip,width=.32\linewidth]{./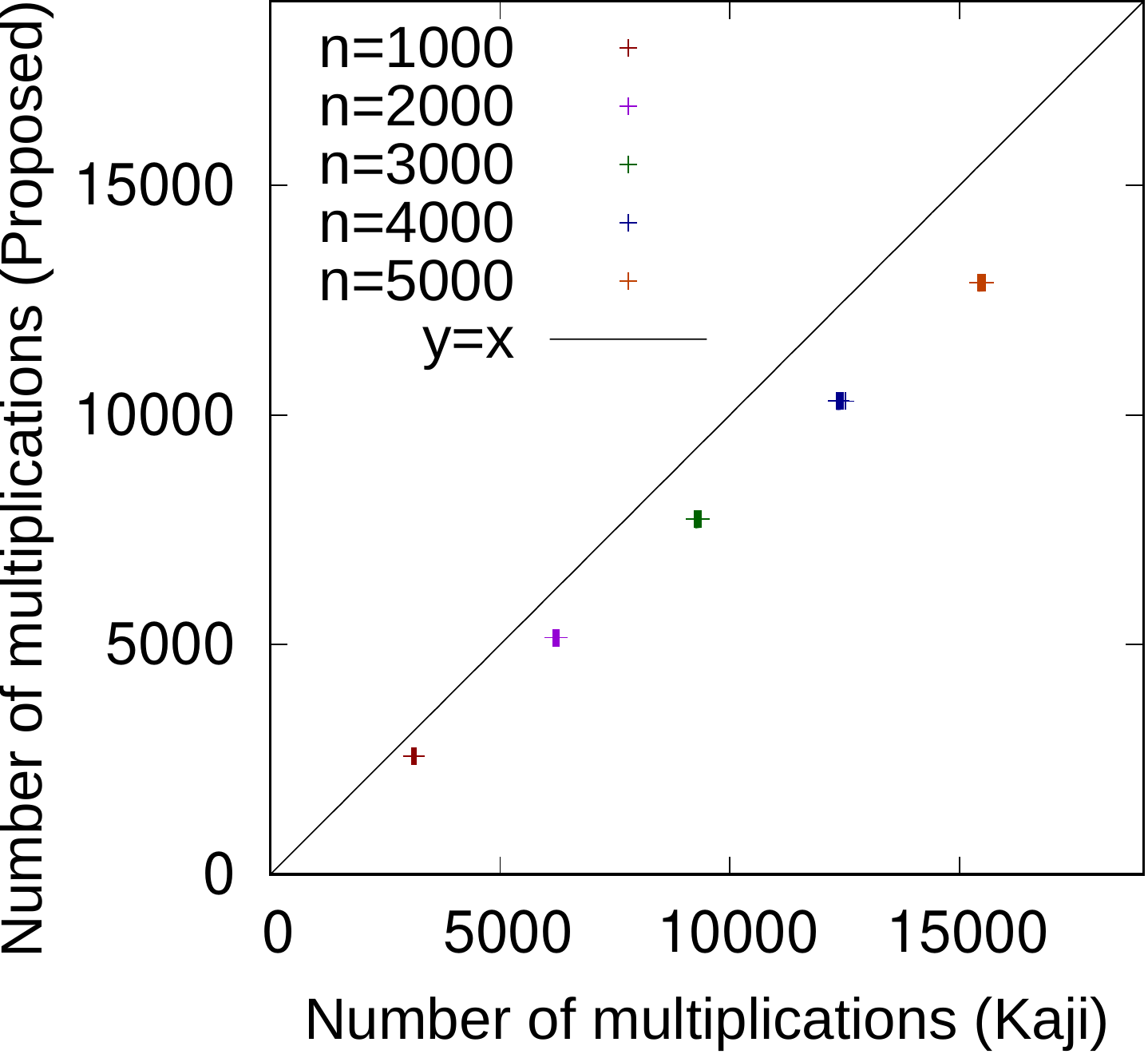} \label{fig:num1mk}
  }
  \subfigure[Additions for $\mathcal{E}_2$]{%
    \includegraphics[clip,width=.32\linewidth]{./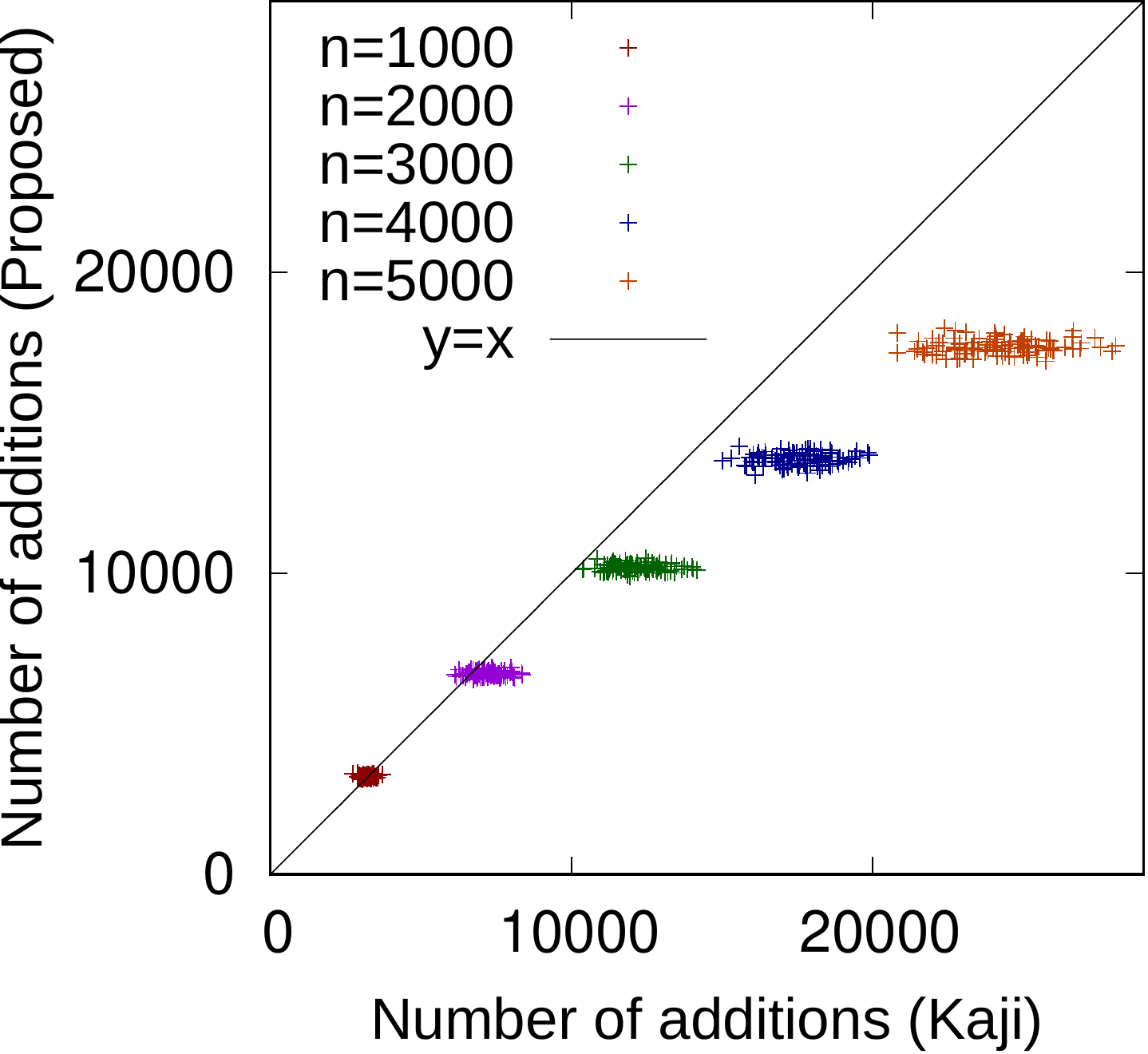} \label{fig:num2k}
  }
  \caption{Number of operations required in K-EA and proposed EA \label{fig:num_k}}
\end{figure*}

As a non-binary case, we consider the irregular LDPC code ensemble $\mathcal{E}_8$ over $\mathbb{F}_8$ with the degree distributions \cite{hu2005regular}:
\begin{align*}
  \lambda(x) \! &= \! 0.49978x \!+\! 0.17434x^2 \!+\! 0.29967x^3 \!+\! 0.02622x^4,\\
  \rho(x) \! &= \!0.81315x^4 \!+\! 0.18685x^5.
\end{align*}
For a binary case, we consider the irregular LDPC code ensemble $\mathcal{E}_2$ defined by the degree distributions \cite{amraoui2006asymptotic}:
\begin{align*}
  \lambda(x)&=0.0739196x + 0.657891x^2 + 0.268189x^{12},\\
  \rho(x)&=0.390753x^4 + 0.361589x^5 + 0.247658x^9.
\end{align*}

Table \ref{tab:ave_comp} evaluates the average number of additions and multiplications by the RU-EA, the K-EA, and the proposed EA for the ensemble $\mathcal{E}_8$ and $\mathcal{E}_2$ with code length $n=1000,2000,\dots,5000$.
For each experiment, we generate 100 parity-check matrices from ensemble $\mathcal{E}_8$ and $\mathcal{E}_2$.
Note that we regard the number of multiplications of those algorithms is 0 for the binary case.
Table \ref{tab:ave_comp} shows that the proposed EA has the smallest complexity for the non-binary case.
In this case, we can reduce the number of multiplications especially.
Table \ref{tab:ave_comp} shows the proposed EA has the smallest complexity for binary code with $n\ge 2000$.
In particular, we can reduce the encoding complexity for the binary codes with long code length.

To compare the number of operations for individual codes, we plot Figs \ref{fig:num_r} and \ref{fig:num_k}.
In those figures, the horizontal (resp.\ vertical) axis represents the number of operations for the existing EA, i.e., RU-EA or K-EA (resp.\ proposed EA).
For example, Fig.~\ref{fig:num2k} compares the number of additions by the K-EA and the proposed EA for 100 codes in $\mathcal{E}_2$.
In this figure, we plot a dot at $(x,y)$ if the K-EA requires $x$ additions and the proposed EA requires $y$ additions for a fixed code in $\mathcal{E}_2$.
Hence, we plot 100 dots for each code length.

Figures \ref{fig:num1ar}, \ref{fig:num1mr}, and \ref{fig:num2r} show that the proposed EA requires lower operations than the RU-EA for every code.
From Fig.~\ref{fig:num1ak}, the proposed EA and K-EA require about the same number of additions for the codes in $\mathcal{E}_8$.
Figure \ref{fig:num1mk} shows the proposed EA requires lower multiplications than the K-EA for every code in $\mathcal{E}_8$.
Figure \ref{fig:num2k} shows the proposed EA has lower complexity than the K-EA for the long binary codes.

\subsection{Modification of Proposed EA for Short Code Length}\label{sec:improve}

\begin{figure}[tb]
  \centering
  \includegraphics[width=.7\linewidth]{./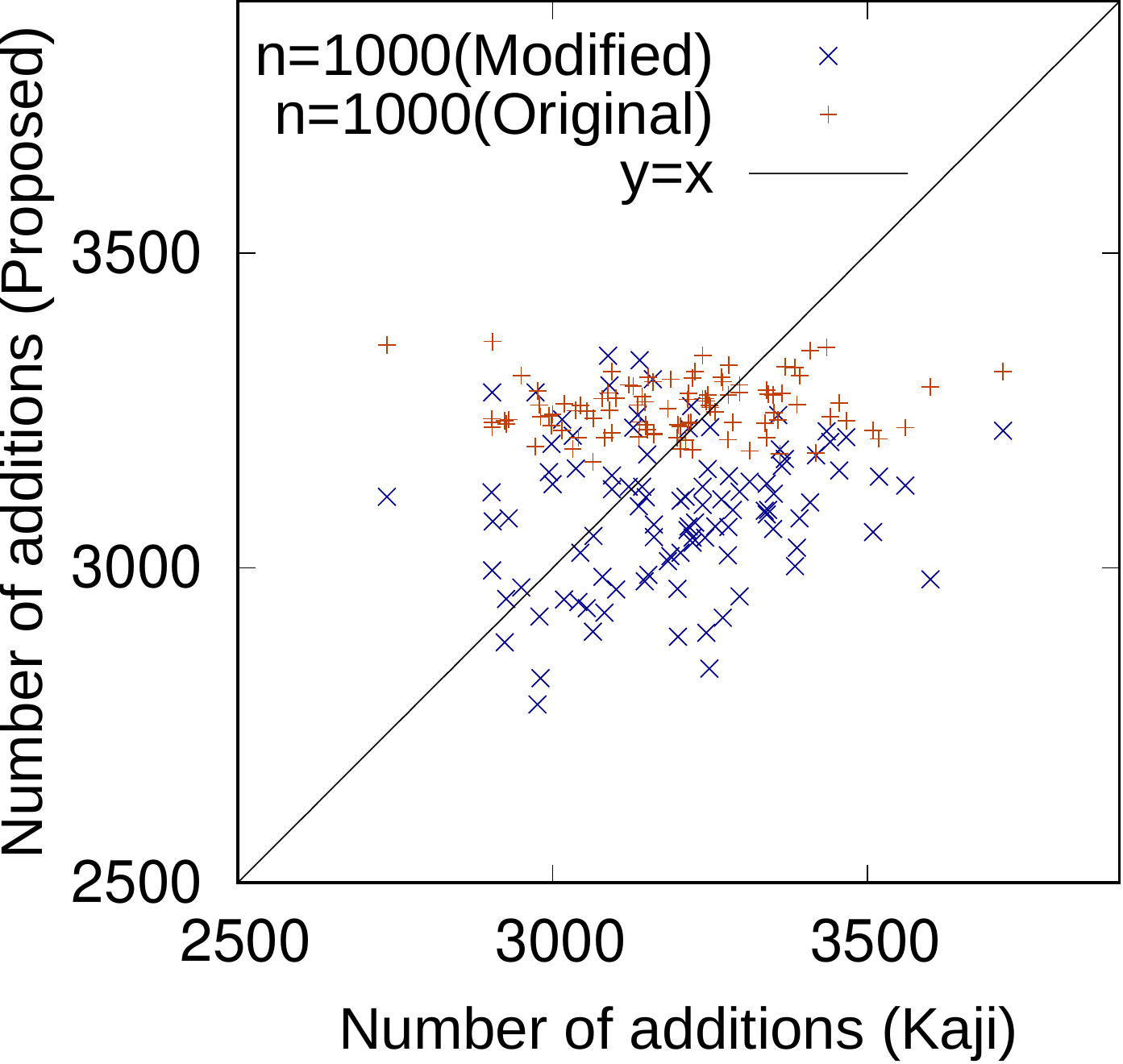}
  \caption{The number of additions to the K-EA and the proposed EA for codes in $\mathcal{E}_2$ with $n=1000$  \label{fig:mod} }
\end{figure}

As shown in the previous section, the proposed EA has a larger complexity than the K-EA for short binary LDPC codes.
As mentioned in Remark \ref{re:subm}, we can use another efficient algorithm to solve $\mat{A}_i \vec{p}_i^T = \vec{b}^T$.
In this section, we modify the proposed EA by the LU-factorization to reduce the complexity of the proposed EA for short binary LDPC codes.

We evaluate the number of additions required in the modified algorithm for 100 codes in $\mathcal{E}_2$ with $n=1000$.
Those results are plotted in Fig.~\ref{fig:mod}.
The average number of additions required in the modified EA is $3087.57$.
On the other hand, the average number of additions required in the K-EA is 3195.16 from Table \ref{tab:ave_comp}.
From the above, the modified EA can reduce the complexity and have lower average complexity than the K-EA.

\section{Conclusion}
In this paper, we have proposed an efficient EA for the binary and non-binary irregular LDPC codes.
As a result, the proposed EA has lower complexity than the RU-EA and the K-EA.

\section*{Acknowledgment}
We would like to thank Dr. Yuichi Kaji (Nagoya University) for providing the program of Kaji's EA.
This research was supported by JSPS KAKENHI Grant Number 16K16007 and Yamaguchi University Fund.

\end{document}